
\documentclass{amsart}

\usepackage{a4wide}
\usepackage{amssymb}
\usepackage{amsmath}
\usepackage{amsfonts}
\usepackage{amsthm}
\usepackage{graphicx}
\usepackage{epsfig}
\usepackage{longtable}
\usepackage[nocompress]{cite}

\usepackage{hyperref}
\usepackage[usenames,dvipsnames]{color}%
%

\newtheorem{thm}{Theorem}[section]
\newtheorem{lem}[thm]{Lemma}
\newtheorem{cor}[thm]{Corollary}

\theoremstyle{definition}

\theoremstyle{remark}
\newtheorem*{rmk}{Remark}

\numberwithin{equation}{section}

\newcommand{\DEF}{{:=}}

\newcommand{\Cset}{\mathbb{C}}

\newcommand{\re}{\mathop{\mathrm{Re}}}

\newcommand{\dd}{\,d}

\DeclareMathOperator{\bernoulliB}{B}
\DeclareMathOperator{\BigOh}{\mathcal{O}}

\DeclareMathOperator{\CAP}{cap}

\DeclareMathOperator{\gammafcn}{\Gamma}

\DeclareMathOperator{\GegenbauerC}{\mathrm{C}}

\DeclareMathOperator{\digammafcn}{\psi}

\DeclareMathOperator{\LegendreP}{P}

\DeclareMathOperator{\zetafcn}{\zeta}

\newcommand{\Pochhsymb}[2]{{\left(#1\right)_{#2}}}

\hyphenation{ultra-spherical}

\allowdisplaybreaks[1]

\title[Complete minimal logarithmic energy asymptotics of the logarithmic energy \dots]{Complete minimal logarithmic energy asymptotics for points in a compact interval: a consequence of the discriminant of Jacobi polynomials} 
\author{ J. S. Brauchart} 
\thanks{
\noindent 
The research of this author was supported, in part, by the Austrian Science Fund FWF project F5510 (part of the Special Research Program (SFB) ``Quasi-Monte Carlo Methods: Theory and Applications'') and M2030 Meitner-Programm ``Self organization by local interaction''. 
}

\date{\today}

\begin{document}

\address{J. S. Brauchart:
Institute of Analysis and Number Theory, 
Graz University of Technology,
Kopernikusgasse 24/II, 
8010 Graz, 
Austria}
\email{j.brauchart@tugraz.at}

\begin{abstract} 
The electrostatic interpretation of zeros of Jacobi polynomials, due to Stieltjes and Schur, enables us to obtain the complete asymptotic expansion as $n \to \infty$ of the minimal logarithmic potential energy of $n$ point charges restricted to move in the interval $[-1,1]$ in the presence of an external field generated by endpoint charges. By the same methods, we determine the complete asymptotic expansion of the logarithmic energy $\sum_{j\neq k} \log(1/| x_j - x_k |)$ of Fekete points, which, by definition, maximize the product of all mutual distances $\prod_{j\neq k} | x_j - x_k |$ of $N$ points in $[-1,1]$ as $N \to \infty$. The results for other compact intervals differ only in the quadratic and linear term of the asymptotics. 
%
%
%
Explicit formulas and their asymptotics follow from the discriminant, leading coefficient, and special values at $\pm 1$ of Jacobi polynomials. For all these quantities we derive complete Poincar{\'e}-type asymptotics.
%
\end{abstract}

\keywords{Elliptic Fekete points, Fekete points, classical orthogonal polynomials, complete asymptotics, discriminant, Jacobi polynomials, minimal logarithmic energy}
\subjclass[2000]{}

\maketitle

\section{Introduction and statement of results}

Point sets characterized by means of minimizing a suitably defined potential energy function have applications in a surprising number of problems in various fields of science and engineering ranging from physics over chemistry to geodesy and mathematics. We refer the reader to \cite{KuSa1997,HaSa2004,CoKu2007,BrDi2013,BrGr2015,Le2016,SaSe2015,SaSe2015b,RoSe2016,PeSe2015,BeMaOr2016,Be2015,BoDrHaSaSt2017,CoKuMiRaVi2017,Vi2017,Sk2019,BeDrOr2020,BeKnNo2020,PeSe2020,Ba2021,Pau2021,BeEtMaOr2021,BiGlMaPaVl2021,KivanMe2021} and the book \cite{BoHaSaBook2019}. A fundamental question concerns the asymptotic expansion of the minimal energy as the number of points tend to infinity. In general, at best only one or two terms are known; cf. \cite{LoSa2010,BrHaSa2012b,BeEt2020,BeSa2018,LoMcC2021} in case of the sphere and \cite{MaMaRa2004,Bo2012} for curves. A notable exception are the minimal energy asymptotics for the unit circle for a whole class of energy functionals for which equally spaced points are optimal configurations. In these cases the energy formula can be written in a form that provides a complete asymptotic expansion in terms of powers of the number of points (see \cite{Br2016,BrHaSa2012,BrHaSa2009}): for $s \in (-2,\infty)$ with $s \neq 0, 1, 3, 5, \dots$ and for every $p = 1, 2, 3, \dots$, one has for the optimal \emph{Riesz $s$-energy} the asymptotic expansion
\begin{equation*} 
\begin{split}
\mathcal{L}_s(N) 
&= W_s \, N^2 + \frac{2\zetafcn(s)}{(2\pi)^s} \, N^{1+s} + 
\sum_{n=1}^p \alpha_n(s) \frac{2\zetafcn(s-2n)}{(2\pi)^s} \, N^{1+s-2n} + \BigOh_{s,p}(N^{-1+ s-2p}) 
\end{split}
\end{equation*}
as $N \to \infty$, where the constant $W_s$ is explicitly known, $\zetafcn(s)$ is the classical Riemann zeta function, and the explicitly computable coefficients $\alpha_n(s)$, $n\geq0$, satisfy the generating function
relation
\begin{equation*} \label{sinc.power.0}
\left( \frac{\sin \pi z}{\pi z} \right)^{-s} = 
\sum_{n=0}^\infty \alpha_n(s) \, z^{2n}, \quad |z|<1, \ s\in \Cset.
\end{equation*}
The logarithmic energy of $N$ equally spaced points, which provide minimizing configurations, simply is
\begin{equation*}
\mathcal{L}_0(N) = - N \, \log N.
\end{equation*}
We remark that for general curves much less is known. We refer to \cite{MaMaRa2004,Bo2012}. In the following we shall utilize the fact that zeros of classical orthogonal polynomials can be characterized as minimizing configurations of certain potential energy functions for logarithmic point interactions. This approach enables us to derive complete asymptotic expansions. 

Let $A$ be an infinite compact subset of the complex plane $\mathbb{C}$. A configuration of $N$ points $\zeta_1,\dots,\zeta_N \in A$, $N\geq2$, that maximizes the product of all mutual distances $\prod_{j\neq k}|z_j-z_k|$ among $N$-point systems $z_1,\dots,z_N \in A$ is called an {\em $N$-th system of Fekete points of $A$}. The maximum
\begin{equation} \label{eq:Nth.discriminant.of.interval}
\Delta_N(A) \DEF \max_{z_1,\dots,z_N \in A} \mathop{\prod_{j=1}^N \prod_{k=1}^N}_{j \neq k} \left| z_j - z_k \right|
\end{equation}
is the {\em $N$-th discriminant of $A$}. A fundamental potential-theoretic result for the {\em transfinite diameter or logarithmic capacity} $\CAP A$  of $A$ is 
\begin{equation*}
\CAP A = \lim_{N\to\infty} \left[ \Delta_N(A) \right]^{1/[N(N-1)]}.
\end{equation*}
%
Fekete points, by definition, are points that maximize the Vandermonde determinant that appears in the polynomial Lagrange interpolation formula. 
It was Fekete~\cite{Fe1926} who investigated the connection between polynomial interpolation and the discrete logarithmic energy problem, which for given $N$ consists of finding those $N$-point configurations with minimal discrete {\em logarithmic energy} 
\begin{equation} \label{eq:log.energy}
E_0(z_1,\dots,z_N) \DEF \mathop{\sum_{j=1}^N \sum_{k=1}^N}_{j \neq k} \log \frac{1}{\left| z_j - z_k \right|}, \qquad z_1,\dots,z_N\in A.
\end{equation}
We define the {\em logarithmic $N$-point energy of $A$} to be
\begin{equation} \label{eq:min.log.energy}
\mathcal{E}_0(A; N) \DEF \sup \left\{ E_0(z_1,\dots,z_N) : z_1,\dots,z_N\in A \right\} = - \log \Delta_N(A).
\end{equation}
One main goal of this paper is to derive the complete asymptotic expansion of $\mathcal{E}_0(A; N)$ as $N\to\infty$ when $A$ is the interval $[-1,1]$; see Theorem~\ref{thm:log.n.point.energy.asymptotics}. Indeed, regarding line-segments, it suffices to consider the interval $[-1,1]$, since the $N$-th discriminant of the rotated, dilated, and translated set $A^\prime=a + \eta e^{i \phi} A$ is given by $\Delta_N(A^\prime) = \eta^{N(N-1)} \Delta_N(A)$ and, therefore, $\mathcal{E}_0(A^\prime; N) - \mathcal{E}_0(A; N) = - ( \log \eta ) N ( N - 1 )$.

Let $q > 0$ and $p > 0$ be numbers representing charges at the left endpoint and right endpoint, respectively, of the interval $[-1,1]$. The problem of finding $n$ points $x_1^{(n)},\dots,x_n^{(n)}$, the locations of unit point charges, in the interior of $[-1,1]$ such that the expression
\begin{equation} \label{Jacobi.expr}
T_n(x_1,\dots,x_n) \DEF \prod_{i=1}^n \left( 1 - x_i \right)^p \, \prod_{j < k} \left| x_j - x_k \right| \, \prod_{\ell=1}^n \left( 1 + x_{\ell} \right)^q
\end{equation}
is maximized, or equivalently, $\log(1/T_n)$ is minimized over all $n$-point systems $x_1,\dots,x_n$ in $[-1,1]$, is a classical problem that owes its solution to Stieltjes \cite{St1884,St1885} (also see Schur~\cite{Sch1918}).
%
In analogy to the $N$-th discriminant of a compact set $A$ we may define the \emph{$n$-th $(p,q)$-discriminant of $[-1,1]$} as
\begin{equation} \label{eq:Nth.pq.discriminant.of.interval}
\Delta_n^{(p,q)}([-1,1]) \DEF \max_{x_1,\dots,x_n \in [-1,1]} \left( T_n(x_1,\dots,x_n) \right)^2.
\end{equation}
The quantity $\log(1/T_n^2)$ can be interpreted as the potential energy of the point charges at $x_1,\dots,x_n$ in an external field exerted by the charge $p$ at $x=1$ and the charge $q$ at $x=-1$, where the 'points' interact according to a logarithmic potential. We shall call such minimal potential energy points {\em {\bf elliptic} Fekete points} in order to distinguish them from the Fekete points defined previously. Stieltjes showed that the points
$x_1^{(n)},\cdots,x_n^{(n)}$
of {minimal potential energy} are, in fact, the zeros of the Jacobi polynomial $P_n^{(\alpha,\beta)}$, where $\alpha=2p-1$ and $\beta=2q-1$. 
A more modern approach is to have external fields in form of appropriate weight functions instead of constraints. (See, e.g., \cite{Is2000b} for a discussion of this model.) We also refer the interested reader to the survey article \cite{MaMaMa2007}.

Stieltjes’ ingenious observation that the zeros of classical orthogonal polynomials have an electrostatic interpretation in terms of logarithmic potential enables us to find, for every $n\geq2$, the explicit elliptic Fekete $n$-point configuration for the discrete logarithmic energy problem associated with the given family of orthogonal polynomials.
Moreover, since the target functions of the respective maximum problems are closely related to the discriminants of the classical polynomials, the asymptotic expansion of the potential energy of elliptic Fekete $n$-point configurations as $n \to \infty$ can be obtained. Our goal is to derive the complete asymptotic expansion of the potential energy of elliptic Fekete $n$-point configurations associated with the external field problem induced by classical orthogonal polynomials. 

We remark that the approach used here can be also applied to point systems in $[0,\infty)$ and $\mathbb{R}$ with suitable constraints on the centroid or inertia of the point system which leads to the study of zeros of Laguerre and Hermite polynomials, respectively. A generalization are so-called Menke systems for the real line studied in \cite{MaBrSa2009}; see also \cite{Me1972, Me1974}. Such systems consist of two interlaced sets of points which can be characterized as zeros or extrema of classical orthogonal polynomials. The asymptotic analysis of the associated discriminants is technically much more involved and we leave the presentation of these results to follow up papers. 

%
%

{\bf Outline of the paper:} In the remaining part of the introduction we present the asymptotic expansions for elliptic Fekete points in the interval $[-1,1]$ and compare the results with the expansion for Fekete points. In Section~\ref{sec:aux.res}, we gather asymptotic results for the discriminant of the Jacobi polynomial. The proofs of the main asymptotic results are presented in Section~\ref{sec:proofs.main.res}. The Appendix collects technical results that are frequently used in the asymptotic analysis.

\subsection{Preliminaries} 
Our asymptotic expansions are of Poincar{\'e}-type and we adapt the notion of writing them as infinite series (even if an infinite series does not converge). We make use of the usual computational rules. 
%
The coefficients of the asymptotic expansions will be given in terms of the Riemann zeta function~$\zetafcn(s)$ and the Hurwitz zeta function $\zetafcn(s,a)$ and their (partial) derivatives with respect to $s$ evaluated at negative integers $s$. The well-known relation
\begin{equation*}
\zetafcn( -m, a ) = - \frac{\bernoulliB_{m+1}( a )}{m+1}, \qquad m \in \mathbb{N}_0,
\end{equation*}
enables us to use instead Bernoulli polynomials $\bernoulliB_m$ and the Bernoulli numbers $B_m$. 
The {\em Glaisher-Kinkelin} constant (see \cite[p.~135]{Fi2003})\footnote{The established symbol for the Glaisher-Kinkelin constant is $A$ which we also use for a generic compact set. The use of the symbol $A$ should be clear from the context.} is defined by
\begin{equation} \label{eq:Glaisher-Kinkelin}
A \DEF \lim_{n\to \infty} \frac{1^1 2^2 \cdots n^n}{n^{n(n+1)/2+1/12} e^{-n^2/4}} = 1.28242712\dots.
\end{equation}
and appears in our computations by means of the well-known relation $\zetafcn^\prime(-1) = 1/12 - \log A$. The {\em polygamma} function is defined by \cite[6.4.1]{AbSt1992}
\begin{equation*}
\digammafcn^{(n)}(z) = \frac{\dd^{n+1}}{\dd z^{n+1}} \log \gammafcn(z), \qquad n = 1,2,3, \dots.
\end{equation*}
Using Liouville's fractional integration and differentiation operator, one can also define polygamma functions of negative order (called ``negapolygammas'' in \cite{Go1995}) as (see \cite{Ad1998})
\begin{equation*}
\digammafcn^{(-n)}(z) \DEF \frac{1}{(n-2)!} \int_0^z \left( z - t \right)^{n-2} \log \gammafcn(t) \dd t, \quad \re z > 0, \qquad n=1,2,3,\dots.
\end{equation*}
Since 
\begin{equation} \label{eq:negadigamma}
\digammafcn^{(-2)}(x) = \int_0^x \log \gammafcn(t) \dd t = \frac{\left(1-x\right)x}{2} + \frac{x}{2} \log 2\pi - \zetafcn^\prime(-1) + \left. \frac{\partial}{\partial s} \zetafcn(s,x) \right|_{s=-1},
\end{equation}
we rewrite $\zetafcn^\prime(-1,x) \DEF \frac{\partial}{\partial s} \zetafcn(s,x) \big|_{s=-1}$ in terms of $\digammafcn^{(-2)}(x)$.

\subsection{Elliptic Fekete points in the interval $[-1,1]$}
Regarding the external field problem associated with relation \eqref{Jacobi.expr}, we are interested in the asymptotic expansion of the minimum value of the {\em potential energy}
\begin{equation} \label{eq:jacobi.potential.energy}
\mathcal{L}([-1,1], q, p; x_1, \dots, x_n) \DEF 2 \log \frac{1}{T_n(x_1,\dots,x_n)}, \qquad x_1, \dots, x_n \in [-1,1],
\end{equation}
as $n \to \infty$. 
An $n$-point configuration $\{x_{1}^{(n)}, \dots, x_{n}^{(n)} \}$ minimizing \eqref{eq:jacobi.potential.energy}, or equivalently, maximizing \eqref{Jacobi.expr} over all $n$-point configurations in $[-1,1]$ is called an {\em elliptic $(p,q)$-Fekete $n$-point configuration} in $[-1,1]$ associated with the external field implied by \eqref{Jacobi.expr}. We remark that taking twice of $\log(1/T_n)$ as the potential energy is consistent with the physicist's point of view that the potential energy contained in the electrostatic field of $N$ charges $q_1,\dots,q_N$ at positions $z_1,\dots,z_N$ in the plane, up to some constant factor arising from the used unit system, is given by $\sum_{j\neq k} q_j q_k \log(1/|z_j-z_k|)$; see, e.g., Jackson~\cite{Ja1998}.

\begin{thm} \label{thm:Jacobi}
Let $p > 0$ and $q > 0$. The potential energy of elliptic $(p,q)$-Fekete $n$-point configurations in the interval $[-1,1]$ has the Poincar{\'e}-type asymptotic expansion
\begin{equation*}
\begin{split}
\mathcal{L}([-1,1], q, p; n) 
&= \left( \log 2 \right) n^2 - n \log n + 2 \left( \log 2 \right) \left( p + q - 1 \right) n - 2 \left( \left( p - \frac{1}{4} \right)^2 + \left( q - \frac{1}{4} \right)^2 \right) \log n \\
&\phantom{=}+ C_1(p,q) + \sum_{m=1}^{\infty} \frac{(-1)^{m-1}}{m\left(m+1\right)} \, \mathcal{H}_m(p,q) \, n^{-m},
\end{split}
\end{equation*}
where 
\begin{align*}
C_1(p,q) 
&\DEF 2 \left( \left( p + q - 1 \right)^2 - \frac{11}{24} \right) \log 2 - \left( p + q \right) \log \pi - 3 \log A + \digammafcn^{(-2)}( 2p ) + \digammafcn^{(-2)}( 2q ), \\
\mathcal{H}_m(p,q) 
&\DEF \zetafcn( -m-1 ) + \zetafcn( -m-1, 2p ) + \zetafcn( -m-1, 2q ) + \left( 1 - 2^{-m} \right) \zetafcn( -m-1, 2p + 2q - 1 ).
\end{align*}
\end{thm}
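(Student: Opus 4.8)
The starting point is Stieltjes' theorem: the maximiser of $T_n$ in \eqref{Jacobi.expr} is achieved at the zeros $x_1^{(n)},\dots,x_n^{(n)}$ of the Jacobi polynomial $P_n^{(\alpha,\beta)}$ with $\alpha=2p-1$, $\beta=2q-1$. Evaluating $T_n$ at this configuration turns the three product factors in \eqref{Jacobi.expr} into classical polynomial data. The Vandermonde product $\prod_{j<k}|x_j-x_k|$ is (the square root of) the discriminant of $P_n^{(\alpha,\beta)}$, divided by a power of its leading coefficient; the endpoint factors $\prod_i(1-x_i)^p$ and $\prod_\ell(1+x_\ell)^q$ are, up to leading-coefficient powers, the values $P_n^{(\alpha,\beta)}(1)$ and $P_n^{(\alpha,\beta)}(-1)$ raised to the powers $p$ and $q$ respectively (using $\prod_i(1-x_i) = 2^n P_n^{(\alpha,\beta)}(1)/k_n$ etc., where $k_n$ is the leading coefficient). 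So the first step is to assemble a closed-form expression

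$$
\mathcal{L}([-1,1],q,p;n) = 2\log\frac{1}{T_n(x_1^{(n)},\dots,x_n^{(n)})}
$$

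purely in terms of (i) the discriminant of $P_n^{(\alpha,\beta)}$, (ii) the leading coefficient $k_n^{(\alpha,\beta)}$, and (iii) the special values $P_n^{(\alpha,\beta)}(\pm1)$. All three are given by standard explicit formulas involving products/quotients of Gamma functions and powers of $2$; these are exactly the quantities for which Section~\ref{sec:aux.res} is said to provide complete Poincar\'e-type asymptotics.

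The second step is to substitute those explicit finite-product formulas and take logarithms, reducing everything to sums of terms of the shape $\log\Gamma(n+c)$, $\log\Gamma(2j+c)$ summed over $j$, $\log 2$ times polynomials in $n$, and $\log n$ terms. The sums of $\log\Gamma$ over arithmetic progressions are handled by the Barnes $G$-function / the relation \eqref{eq:negadigamma} expressing $\int_0^x\log\Gamma$ in terms of $\zetafcn'(-1,x)$ and hence $\digammafcn^{(-2)}$; this is where the $\digammafcn^{(-2)}(2p)$, $\digammafcn^{(-2)}(2q)$ and the $-3\log A$ (via $\zetafcn'(-1)=1/12-\log A$) enter $C_1(p,q)$. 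The remaining single $\log\Gamma(n+c)$ factors are expanded by the Stirling series, and the $n^{-m}$ coefficients are collected. The crucial bookkeeping fact is that Stirling-type expansions of $\log\Gamma(n+c)$ produce, at order $n^{-m}$, the Bernoulli polynomial $\bernoulliB_{m+1}(c)$ up to the factor $\frac{(-1)^{m-1}}{m(m+1)}$ (or equivalently $-\zetafcn(-m-1,c)/\big(\text{stuff}\big)$), so that the four "shift parameters" $c=1,\,2p,\,2q,\,2p+2q-1$ (the last one, with its $1-2^{-m}$ weight, coming from the discriminant's doubled index $2j$ versus the single index) reassemble precisely into $\mathcal{H}_m(p,q)$ as stated. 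I would organise this as: expand each of the three explicit asymptotic expansions from Section~\ref{sec:aux.res}, combine with the correct integer multiplicities ($1$ copy of the discriminant contribution, a power-$(p+q)$-weighted combination for the leading coefficient, and powers $p,q$ for the endpoint values), and match $n^2$, $n\log n$, $n$, $\log n$, constant, and $n^{-m}$ coefficients.

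The main obstacle is bookkeeping rather than conceptual: one must keep track of how the half-integer parameters $\alpha=2p-1,\beta=2q-1$ propagate through the Gamma-function arguments, and in particular get the coefficient of $\log 2$ right (it receives contributions from every factor — the discriminant formula, the $2^n$ in $\prod(1\mp x_i)$, the leading coefficient $2^{-n}\binom{2n+\alpha+\beta}{n}$-type expression, and the Stirling constants $\tfrac12\log2\pi$), as well as the precise combinatorial origin of the $1-2^{-m}$ weight on the $\zetafcn(-m-1,2p+2q-1)$ term. The cleanest way to avoid sign and factor errors is to treat $p$ and $q$ as free positive reals throughout, verify the formula against the known classical case $p=q=1/2$ (i.e. the Chebyshev/arccos distribution, where the zeros are $\cos\frac{(2j-1)\pi}{2n}$ and $T_n$ is elementary) and against the degenerate link to the plain Fekete/Legendre case, and only then extract the general coefficients. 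Once the three auxiliary expansions of Section~\ref{sec:aux.res} are in hand, the proof is a finite, if intricate, algebraic manipulation terminating in the displayed series.
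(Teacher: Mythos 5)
Your plan follows essentially the same route as the paper's proof: Stieltjes' electrostatic characterization (Theorem~\ref{thm:Jacobi.electrostatic.interpretation}) reduces $T_n$ at the optimal configuration to the discriminant $D_n^{(\alpha,\beta)}$, the leading coefficient $\lambda_n^{(\alpha,\beta)}$, and the endpoint values $P_n^{(\alpha,\beta)}(\pm 1)$, whose Poincar\'e-type expansions (Lemmas~\ref{lem:Jacobi.asymptotics} and~\ref{lem:Jacobi.discr.asymptotics}) are then combined and the coefficients collected. The only slip is the stray factor $2^n$ in your parenthetical identity for $\prod_i(1-x_i)$ --- with the normalization $P_n^{(\alpha,\beta)}(x)=\lambda_n^{(\alpha,\beta)}\prod_i(x-x_i)$ one simply has $\prod_i(1-x_i)=P_n^{(\alpha,\beta)}(1)/\lambda_n^{(\alpha,\beta)}$ --- which does not affect the method.
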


\begin{rmk}
The potential energy of elliptic $(p,q)$-Fekete $n$-point configurations on the interval $[-1,1]$ is invariant under translation (and rotation) of the line-segment $[-1,1]$ in the complex plane. From \eqref{Jacobi.expr} it can be seen that for a scaling constant $\eta>0$ there holds
\begin{equation*}
\mathcal{L}(\eta [-1,1], p , q; n) = \mathcal{L}([-1,1], p , q; n) - \left( \log \eta \right) n^2 - \left( \log \eta \right) \left( 2 p + 2 q -1 \right) n.
\end{equation*}
Thus, only the $n^2$-term and $n$-term are sensitive to a rescaling of the underlying interval.
\end{rmk}

\begin{rmk}
The $n$-th $(p,q)$-discriminant of the interval $[-1,1]$ is given by (cf. Proof of Theorem~\ref{thm:Jacobi})
\begin{equation*}
\Delta_n^{(p,q)}([-1,1]) = 2^{n ( n + 2p + 2q - 1 )} \frac{\prod_{k=1}^n  k^k \left( k + 2p - 1 \right)^{k+2p-1} \left( k + 2q - 1 \right)^{k+2q-1}}{\prod_{k=n-1}^{2(n-1)} \left( k + 2p + 2q \right)^{k+2p+2q}}
\end{equation*}
from which follows an explicit formula for $\mathcal{L}([-1,1], q, p; n)$. An explicit formula in terms of quantities related to Jacobi polynomials is given in \eqref{eq:explicit.cal.L.interval.p.q.n}.
\end{rmk}

In the symmetric external field case $p = q$ we have the following result.
\begin{cor} \label{cor:Jacobi}
Let $p>0$. The potential energy of elliptic $(p,p)$-Fekete $n$-point configurations in the interval $[-1,1]$ has the Poincar{\'e}-type asymptotic expansion
\begin{equation*}
\begin{split}
\mathcal{L}([-1,1], p, p; n) 
&= \left( \log 2 \right) n^2 - n \log n + 2 \left( \log 2 \right) \left( 2 p - 1 \right) n - 4 \left( p - \frac{1}{4} \right)^2 \log n + C_1(p) \\
&\phantom{=}+ \sum_{m=1}^{\infty} \frac{(-1)^{m-1}}{m \left( m + 1 \right)} \,  \mathcal{H}_m(p) \, n^{-m},
\end{split}
\end{equation*}
where 
\begin{align*}
C_1(p) 
&\DEF 2 \left( \left( 2 p - 1 \right)^2 - \frac{11}{24} \right) \log 2 - 2 p \log \pi - 3 \log A + 2 \digammafcn^{(-2)}( 2p ), \\
\mathcal{H}_m(p,q) 
&\DEF \zetafcn( -m-1 ) + 2 \zetafcn( -m-1, 2p ) + \left( 1 - 2^{-m} \right) \zetafcn( -m-1, 4p - 1 ).
\end{align*}
\end{cor}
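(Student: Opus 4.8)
The plan is to specialize Theorem~\ref{thm:Jacobi} to the case $q = p$; since that theorem is valid for all $p,q>0$, no new analysis is required and the whole argument reduces to collecting like terms in each coefficient of the expansion.

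First I would dispose of the terms that are polynomial in $n$ or involve $\log n$. The $n^2$-coefficient $\log 2$ and the $-n\log n$ term do not depend on the charges and are inherited verbatim; the linear coefficient $2(\log 2)(p+q-1)$ becomes $2(\log 2)(2p-1)$; and the $\log n$-coefficient $-2\bigl((p-\tfrac{1}{4})^2+(q-\tfrac{1}{4})^2\bigr)$ collapses to $-4(p-\tfrac{1}{4})^2$. Next I would simplify the constant term: setting $q=p$ in $C_1(p,q)$ merges $\digammafcn^{(-2)}(2p)+\digammafcn^{(-2)}(2q)$ into $2\,\digammafcn^{(-2)}(2p)$ and replaces $p+q$ by $2p$, yielding the stated $C_1(p)$. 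Finally, in the coefficient $\mathcal{H}_m(p,q)$ of $n^{-m}$ the two identical Hurwitz-zeta contributions $\zetafcn(-m-1,2p)+\zetafcn(-m-1,2q)$ combine to $2\,\zetafcn(-m-1,2p)$, and the shift $2p+2q-1$ in the last term becomes $4p-1$, producing the asserted $\mathcal{H}_m(p)$.

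Since none of these substitutions affects the Poincar{\'e}-type nature of the expansion or the error control already established in Theorem~\ref{thm:Jacobi}, there is essentially no obstacle: the only point requiring care is the routine bookkeeping to confirm that the simplifications of $C_1(p,q)$ and $\mathcal{H}_m(p,q)$ at $q=p$ reproduce exactly the stated $C_1(p)$ and $\mathcal{H}_m(p)$.
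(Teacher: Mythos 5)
Your specialization $q=p$ of Theorem~\ref{thm:Jacobi} is exactly how the corollary follows in the paper (which gives no separate proof), and all the term-by-term simplifications you record are correct. Nothing further is needed.
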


The asymptotic expansion of the {\bf logarithmic energy} of elliptic $(p,q)$-Fekete $n$-point configurations in $[-1,1]$ is given next.
\begin{thm} \label{thm:Jacobi.log.energy}
Let $p > 0$ and $q > 0$. The logarithmic energy of elliptic $(p,q)$-Fekete $n$-point configurations $\omega_n$ in $[-1,1]$ has the Poincar{\'e}-type asymptotic expansion
\begin{equation*}
\begin{split}
E_0(\omega_n) &= \left( \log 2 \right) n^2 - n \log n - 2 \left( \log 2 \right) n + 2 \left( p^2 + q^2 - \frac{1}{8} \right) \log n + C_1^\prime(p,q) \\
&\phantom{=}+ \sum_{m=1}^{\infty} \frac{(-1)^{m-1}}{m} \, \mathcal{H}_{m}^\prime(p,q) \, n^{-m}
\end{split}
\end{equation*}
as $n \to \infty$, where
\begin{align*}
C_1^\prime(p,q) 
&\DEF - 2 \left( \left( p + q \right)^2 - \frac{13}{24} \right) \log 2 - 3 \log A - 2 p \log \gammafcn(2p) + \digammafcn^{(-2)}(2p) - 2 q \log \gammafcn(2q) + \digammafcn^{(-2)}(2q), \\
\mathcal{H}_{m}^\prime(p,q)
&\DEF \frac{\zetafcn( -m-1 ) + \zetafcn( -m-1, 2p ) + \zetafcn( -m-1, 2q ) + \left( 1 - 2^{-m} \right) \zetafcn( -m-1, 2p+2q-1)}{m+1} \\
&\phantom{=}- 2p \zetafcn( -m, 2p ) - 2q \zetafcn( -m, 2q ) - 2 \left( 1 - 2^{-m} \right) \left( p + q \right) \zetafcn( -m, 2p + 2q - 1 ).
\end{align*}
\end{thm}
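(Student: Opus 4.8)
The plan is to bootstrap off Theorem~\ref{thm:Jacobi}: I would express $E_0(\omega_n)$ through the already-computed potential energy $\mathcal{L}([-1,1],q,p;n)$ plus two ``boundary'' moment sums that admit closed forms. Recall that $\omega_n = \{x_1^{(n)},\dots,x_n^{(n)}\}$ is the zero set of the Jacobi polynomial $P_n^{(\alpha,\beta)}$ with $\alpha = 2p-1$ and $\beta = 2q-1$. Comparing \eqref{eq:log.energy} with \eqref{Jacobi.expr} and \eqref{eq:jacobi.potential.energy}, both $E_0(\omega_n)$ and $\mathcal{L}([-1,1],q,p;n)$ differ from $-2\sum_{j<k}\log|x_j^{(n)}-x_k^{(n)}|$ only through the external-field terms, whence the elementary identity
\begin{equation*}
E_0(\omega_n) = \mathcal{L}([-1,1],q,p;n) + 2p\sum_{i=1}^{n}\log\bigl(1-x_i^{(n)}\bigr) + 2q\sum_{i=1}^{n}\log\bigl(1+x_i^{(n)}\bigr).
\end{equation*}
It therefore remains to find complete asymptotics of the two moment sums and add them to the expansion of Theorem~\ref{thm:Jacobi}.

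To evaluate the moment sums I would write $P_n^{(\alpha,\beta)}(x) = k_n^{(\alpha,\beta)}\prod_{i=1}^{n}(x-x_i^{(n)})$ and use the classical leading coefficient $k_n^{(\alpha,\beta)} = 2^{-n}\binom{2n+\alpha+\beta}{n}$ together with the special values $P_n^{(\alpha,\beta)}(1) = \binom{n+\alpha}{n}$ and $P_n^{(\alpha,\beta)}(-1) = (-1)^n\binom{n+\beta}{n}$. Evaluating $P_n^{(\alpha,\beta)}$ at $\pm 1$ and dividing by $k_n^{(\alpha,\beta)}$ gives, with $\alpha+1 = 2p$ and $\beta+1 = 2q$,
\begin{equation*}
\prod_{i=1}^{n}\bigl(1-x_i^{(n)}\bigr) = \frac{2^{n}\,\gammafcn(n+2p)\,\gammafcn(n+2p+2q-1)}{\gammafcn(2p)\,\gammafcn(2n+2p+2q-1)},
\end{equation*}
and the same formula with $2p$ replaced by $2q$ in the first numerator factor for $\prod_i(1+x_i^{(n)})$; equivalently these identities already follow from the explicit $(p,q)$-discriminant formula recorded in the remark after Theorem~\ref{thm:Jacobi}. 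Taking logarithms turns each moment sum into a fixed linear combination of $\log\gammafcn(n+2p)$, $\log\gammafcn(n+2q)$, $\log\gammafcn(n+2p+2q-1)$, $\log\gammafcn(2n+2p+2q-1)$ and of the constants $n\log 2$, $\log\gammafcn(2p)$, $\log\gammafcn(2q)$.

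The last step is to substitute the complete Poincar\'e-type expansion of the log-Gamma function at a shifted argument (gathered in Section~\ref{sec:aux.res} and the Appendix),
\begin{equation*}
\log\gammafcn(n+a) = \left(n+a-\tfrac{1}{2}\right)\log n - n + \tfrac{1}{2}\log(2\pi) + \sum_{m\ge 1}\frac{(-1)^{m}}{m}\,\zetafcn(-m,a)\,n^{-m},
\end{equation*}
together with its argument-doubling counterpart (replace $n$ by $2n$) for $\log\gammafcn(2n+c)$, which injects the extra factor $2^{-m}$ into the $n^{-m}$-term. Inserting this into the two moment sums, multiplying by $2p$ and $2q$ respectively, and adding the expansion of $\mathcal{L}([-1,1],q,p;n)$ from Theorem~\ref{thm:Jacobi}, one collects powers of $n$: the $n^{2}$-, $n\log n$- and $n$-terms collapse to $(\log 2)n^{2}$, $-n\log n$ and $-2(\log 2)n$; the $\log n$-coefficient combines as $-2\bigl((p-\tfrac14)^{2}+(q-\tfrac14)^{2}\bigr) + 2p\bigl(2p-\tfrac12\bigr) + 2q\bigl(2q-\tfrac12\bigr) = 2\bigl(p^{2}+q^{2}-\tfrac18\bigr)$; the $\log\pi$ contributions cancel, and the remaining constants --- $-3\log A$, $\digammafcn^{(-2)}(2p)$ and $\digammafcn^{(-2)}(2q)$ inherited from $C_1(p,q)$, the new $-2p\log\gammafcn(2p)$ and $-2q\log\gammafcn(2q)$, and a rebalanced $\log 2$-coefficient --- assemble into $C_1^\prime(p,q)$; and for every $m\ge 1$ the $n^{-m}$-coefficient equals $\tfrac{(-1)^{m-1}}{m}\,\mathcal{H}_{m}^\prime(p,q)$, since, with $\mathcal{H}_m(p,q)$ the coefficient block of Theorem~\ref{thm:Jacobi},
\begin{equation*}
\frac{\mathcal{H}_m(p,q)}{m+1} - 2p\,\zetafcn(-m,2p) - 2q\,\zetafcn(-m,2q) - 2\bigl(1-2^{-m}\bigr)(p+q)\,\zetafcn(-m,2p+2q-1) = \mathcal{H}_{m}^\prime(p,q).
\end{equation*}

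The conceptual content is light --- everything reduces to standard closed forms for Jacobi polynomials --- so the main obstacle is the bookkeeping of the Poincar\'e series. One has to carry the four log-Gamma expansions at $n+2p$, $n+2q$, $n+2p+2q-1$ and $2n+2p+2q-1$ simultaneously with the series of Theorem~\ref{thm:Jacobi}, reconcile the $\zetafcn(-m,\cdot)$-terms (coming from the moment sums) with the $\zetafcn(-m-1,\cdot)$-terms (coming from $\mathcal{L}$, i.e.\ from the discriminant), and confirm that the $\bigl(1-2^{-m}\bigr)$-factor descending from the argument doubling in $k_n^{(\alpha,\beta)}$ survives into $\mathcal{H}_{m}^\prime(p,q)$; the low-order terms require equal care so that the non-elementary constants combine precisely into $C_1^\prime(p,q)$.
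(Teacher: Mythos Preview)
Your proposal is correct. The paper takes a slightly more direct route: from the definition \eqref{eq:D.n.alpha.beta} of the discriminant one has immediately
\[
E_0(\omega_n) \;=\; 2(n-1)\log\lambda_n^{(\alpha,\beta)} - \log D_n^{(\alpha,\beta)},
\]
and Lemmas~\ref{lem:Jacobi.asymptotics} and~\ref{lem:Jacobi.discr.asymptotics} are applied to this identity directly rather than passing through Theorem~\ref{thm:Jacobi}. Your decomposition $E_0 = \mathcal{L} + 2p\sum_i\log(1-x_i^{(n)}) + 2q\sum_i\log(1+x_i^{(n)})$ is equivalent: subtracting \eqref{eq:explicit.cal.L.interval.p.q.n} from the paper's formula for $E_0$ recovers exactly your moment sums via $\prod_i(1\mp x_i^{(n)}) = P_n^{(\alpha,\beta)}(\pm 1)/\lambda_n^{(\alpha,\beta)}$, so the same log-Gamma expansions are in play either way. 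The paper's organization spares you from carrying the full series of Theorem~\ref{thm:Jacobi} through a second combination step; your route has the mild advantage of reusing an already-established result and making the relation $\mathcal{H}_m'(p,q) = \mathcal{H}_m(p,q)/(m+1) + (\text{boundary terms})$ transparent.
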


\begin{rmk}
Note that the asymptotic expansions of the potential and the logarithmic energy of elliptic $(p,q)$-Fekete $n$-point configurations $\omega_n$ in $[-1,1]$ coincide in the first two leading terms if $p+q\neq 2$ and coincide in the first three leading terms if $p+q=2$.
\end{rmk}

In the case $p = q = 1$, maximizing relation \eqref{Jacobi.expr} for $n$-point configurations in the interval $[-1,1]$ is equivalent with maximizing the product of all mutual distances of $N=n+2$ points in $[-1,1]$: 
\begin{equation}
\mathop{\prod_{j=0}^{n+1}\prod_{k=0}^{n+1}}_{j\neq k} | x_j - x_k |, \qquad -1 \leq x_0, x_1, \dots, x_n, x_{n+1} \leq 1.
\end{equation}
(Indeed, if an endpoint of the interval $[-1,1]$ is not in a configuration $\omega_N$, then the product of all mutual distances between points can be increased by rescaling the points in $\omega_N$.) Hence, the elliptic $(1,1)$-Fekete $n$-point configuration in $[-1,1]$ together with the endpoints $\pm 1$ is also a Fekete $N$-point configuration $\omega_N^*$ on the interval $[-1,1]$ with $N = n + 2$ points. From the electrostatic interpretation of the zeros of classical orthogonal polynomials (cf. Theorem~\ref{thm:Jacobi.electrostatic.interpretation} and remark after that theorem), we have that $\omega_N^*$ is the set of all extremal points (including endpoint extremas) of the Legendre polynomial $\LegendreP_{n+1} = \LegendreP_{N-1}$. 

\begin{thm} \label{thm:log.n.point.energy.asymptotics}
The logarithmic $N$-point energy of the interval $[-1,1]$ has the Poincar{\'e}-type asymptotic expansion
\begin{equation*}
\begin{split}
\mathcal{E}_0([-1,1]; N) &= \left( \log 2 \right) N^2 - N \log N - 2 \left( \log 2 \right) N - \frac{1}{4} \log N + \frac{13 \log 2}{12} - 3 \log A \\
&\phantom{=}+ \sum_{m=1}^{\infty} \frac{1}{m(m+1)} \left(  1 - 2^{-m} + 4 \left( 1 - 2^{-(m+2)} \right) \frac{B_{m+2}}{m+2} \right) N^{-m}
\end{split}
\end{equation*}
as $N\to\infty$. Here, $A$ denotes the Glaisher-Kinkelin constant given in \eqref{eq:Glaisher-Kinkelin}.
\end{thm}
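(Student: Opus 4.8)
The plan is to reduce the statement to the symmetric elliptic Fekete problem with endpoint charges $p=q=1$, already settled in Corollary~\ref{cor:Jacobi}, and then to re-expand the resulting Poincar\'e-type series from the variable $n$ to $N=n+2$. First I would use the fact recalled just before the statement: an $N$-point Fekete configuration $\omega_N^*$ of $[-1,1]$ must contain both endpoints $\pm1$, so removing them leaves an elliptic $(1,1)$-Fekete configuration of $n\DEF N-2$ points. Writing $x_0=-1$, $x_{n+1}=1$ for the endpoints and $x_1,\dots,x_n\in(-1,1)$ for the remaining points, the full product of mutual distances factors, with $T_n$ as in \eqref{Jacobi.expr} for $p=q=1$, as
\begin{equation*}
\prod_{0\le j<k\le n+1}\abs{x_j-x_k}
=\abs{x_0-x_{n+1}}\prod_{i=1}^n(1-x_i)(1+x_i)\prod_{1\le j<k\le n}\abs{x_j-x_k}
=2\,T_n(x_1,\dots,x_n).
\end{equation*}
Maximising over configurations and squaring, and then appealing to \eqref{eq:min.log.energy}, \eqref{eq:Nth.pq.discriminant.of.interval} and \eqref{eq:jacobi.potential.energy}, gives $\Delta_N([-1,1])=4\,\Delta_n^{(1,1)}([-1,1])$ and hence $\mathcal{E}_0([-1,1];N)=-\log\Delta_N([-1,1])=-2\log2+\mathcal{L}([-1,1],1,1;N-2)$.

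Next I would substitute the expansion of Corollary~\ref{cor:Jacobi} with $p=1$, whose leading part is $(\log2)n^2-n\log n+2(\log2)n-\tfrac94\log n$, whose constant term is $C_1(1)=\tfrac{13}{12}\log2-2\log\pi-3\log A+2\digammafcn^{(-2)}(2)$, and whose tail is $\sum_{m\ge1}\frac{(-1)^{m-1}}{m(m+1)}\,\mathcal{H}_m\,n^{-m}$ with $\mathcal{H}_m=\zetafcn(-m-1)+2\zetafcn(-m-1,2)+(1-2^{-m})\zetafcn(-m-1,3)$. Putting $n=N-2$, I would re-expand every ingredient as a Poincar\'e series in $N$ by means of $\log(N-2)=\log N-\sum_{k\ge1}\frac{2^k}{kN^k}$, the ensuing clean identity $-(N-2)\log(N-2)=-N\log N+2\log N+2-\sum_{k\ge1}\frac{2^{k+1}}{k(k+1)}N^{-k}$, and $(N-2)^{-m}=\sum_{j\ge0}\binom{m+j-1}{j}2^jN^{-m-j}$. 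The coefficients of $N^2$, $N\log N$, $N$ and $\log N$ then come out directly as the first four displayed terms. For the constant I would use $\zetafcn(s,2)=\zetafcn(s)-1$, whence $\zetafcn'(-1,2)=\zetafcn'(-1)$, so that \eqref{eq:negadigamma} reduces to $\digammafcn^{(-2)}(2)=\log2\pi-1$; the constant $-2\log2+2+C_1(1)$ then collapses to $\tfrac{13\log2}{12}-3\log A$.

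For the coefficient of $N^{-m}$ I would combine three contributions: $-\tfrac{2^{m+1}}{m(m+1)}$ from $-(N-2)\log(N-2)$, $\tfrac94\cdot\tfrac{2^m}{m}$ from $-\tfrac94\log(N-2)$, and the finite binomial convolution $\sum_{\ell=1}^m\frac{(-1)^{\ell-1}}{\ell(\ell+1)}\mathcal{H}_\ell\binom{m-1}{\ell-1}2^{m-\ell}$ from the tail. Making $\mathcal{H}_\ell$ explicit via $\zetafcn(-\ell-1)=-B_{\ell+2}/(\ell+2)$ and $\zetafcn(-\ell-1,3)=\zetafcn(-\ell-1)-1-2^{\ell+1}$, I expect this sum to collapse to $\frac{1}{m(m+1)}\bigl(1-2^{-m}+4(1-2^{-(m+2)})\frac{B_{m+2}}{m+2}\bigr)$, which is the asserted coefficient; I have checked this by hand for $m=1,2,3$.

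The only non-formal step is the bookkeeping of the last two paragraphs. In particular, despite the super-exponential growth of the $\mathcal{H}_m$, one must verify that the binomial convolution telescopes so that the Bernoulli numbers $B_3,\dots,B_{m+1}$ all cancel and only $B_{m+2}$ survives. Establishing this cancellation for every $m$ --- most cleanly by recasting the three contributions through a generating function in an auxiliary variable and reading off the coefficient, or else by induction on $m$ --- is where the genuine work lies, although each individual manipulation is entirely elementary.
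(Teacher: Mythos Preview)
Your reduction $\mathcal{E}_0([-1,1];N)=-2\log 2+\mathcal{L}([-1,1],1,1;N-2)$ is correct, and substituting $n=N-2$ into a Poincar\'e asymptotic series and re-expanding in $N$ is a legitimate formal operation. However, the paper takes a different route, and for exactly the reason you flag at the end: the binomial convolution of the $\mathcal{H}_\ell$ is awkward, and you leave its collapse for general $m$ unproved beyond hand checks for $m\le 3$.

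The paper sidesteps this entirely by never invoking Corollary~\ref{cor:Jacobi}. Starting from the same relation $E_0(\omega_n\cup\{\pm1\})=2(n+1)\log\lambda_n^{(1,1)}-\log D_n^{(1,1)}-4\log P_n^{(1,1)}(1)-2\log2$, it simplifies the underlying products to obtain a closed form directly in the variable $N$,
\[
\Delta_N([-1,1]) = 2^{N(N-1)}\,N^N\,\prod_{k=1}^{N-1}k^{3k}\,\prod_{k=N-1}^{2(N-1)}k^{-k},
\]
and then uses the identity \eqref{eq:logsum2} to rewrite $-\log\Delta_N$ as $-N(N-1)\log2-N\log N+3\zetafcn^\prime(-1,1)-3\zetafcn^\prime(-1,N)-\zetafcn^\prime(-1,N-1)+\zetafcn^\prime(-1,2N-1)$. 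The asymptotic formula \eqref{eq:1st.HurwitzZeta.s.derivative.at.minus.1} is applied to each of these with the asymptotic variable already equal to $N$ (or $2N$), so the coefficient of $N^{-m}$ emerges as a short combination of three Hurwitz-zeta values at $-m-1$, which reduces to the stated Bernoulli form in two lines. No telescoping of $B_3,\dots,B_{m+1}$ is ever required.

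Your route would in principle give the same answer, but as written it is incomplete: uniqueness of Poincar\'e expansions guarantees the telescoping identity \emph{must} hold, yet that is circular unless you have an independent proof. Carrying out the generating-function or induction argument you sketch is feasible but is precisely the labour the paper's approach avoids. If you want to stay close to your plan, a cleaner variant is to mimic the paper one level down: rather than re-expanding the finished series of Corollary~\ref{cor:Jacobi}, go back to the explicit product for $T_n$, rewrite it in terms of $N=n+2$, and only then apply \eqref{eq:1st.HurwitzZeta.s.derivative.at.minus.1}.
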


\begin{rmk}
The $N$-th discriminant of the interval $[-1,1]$ defined in \eqref{eq:Nth.discriminant.of.interval} can be written as (cf. Proof of Theorem~\ref{thm:log.n.point.energy.asymptotics})
\begin{equation*}
\Delta_N([-1,1]) = 2^{N(N-1)} N^N \frac{\prod_{k=1}^{N-1} k^{3k}}{\prod_{k=N-1}^{2(N-1)} k^{k}}
\end{equation*}
and via \eqref{eq:min.log.energy} we get an explicit formula for $\mathcal{E}_0([-1,1]; N)$. An explicit formula in terms of quantities related to Jacobi polynomials is given in \eqref{eq:min.log.energy.of.interval}.
\end{rmk}

\subsection{Fekete points in the interval $[-2,2]$}

This case has been treated analytically in \cite{Po1964}. More generally, Pommerenke obtained that for a convex compact planar set $A$ of transfinite diameter (logarithmic capacity) $\CAP A$, the $N$-th discriminant of $A$ sastisfies
\begin{equation*}
N^N \left( \CAP A \right)^{N(N-1)} \leq \Delta_N(A) \leq 2^{2(N-1)} N^N \left( \CAP A \right)^{N(N-1)}.
\end{equation*}
Let $W(A) \DEF - \log ( \CAP A )$ denote the {\em logarithmic energy of $A$}. Then is follows that the logarithmic $N$-point energy of convex compact planar set $A$ satisfies
\begin{equation*}
\left( W(A) - \log 4 \right) N + \log 4 \leq \mathcal{E}_0(A; N) - \left( W(A) N^2 - N \log N \right) \leq W(A) N.
\end{equation*}
Considering the star-shaped curves $S_m = \bigcup_{\nu=1}^m [ 0, 2^{2/m} \zeta^\nu]$ ($\zeta \DEF e^{2\pi i / m}$) of transfinite diameter $1$ defined by the conformal mapping $F(z) = z ( 1 + z^{-m} )^{2/m}$, where $m$ is the number of star branches, he showed that $\Delta_N(S_2) \geq 2^{2(N-1)} N^N$. Consequently, for $A=[-2,2]=S_2$ these results imply 
\begin{equation*}
2^{2(N-1)} N^N \leq \Delta_N(S_2) \leq 2^{2(N-1)} N^N.
\end{equation*}

In \cite{BeClDu2004} the electrostatic equilibria of $N$ discrete charges of size $1/N$ on a two-dimensional conductor (domain) are studied. Also \cite{BeClDu2004} is mostly concerned with placement of charges, it provides an interpretation of the terms of the asymptotics of the ground-state energy, which we will follow here. From Theorem~\ref{thm:log.n.point.energy.asymptotics} we have that (note that $\CAP[-2,2]=1$ and therefore $W([-2,2])=0$)
\begin{align*}
\frac{\mathcal{E}_0([-2,2];N)}{N^2} 
&= W([-2,2]) & &\text{(continuum correlation energy)} \\
&\phantom{=}- \frac{\log N}{N} & &\text{(self energy)} \\
&\phantom{=}- \frac{\log 2}{N} & &\text{(correlation energy)} \\
&\phantom{=}- \frac{1}{4} \frac{\log N}{N^2} & & \\
&\phantom{=}- \left( \frac{13 \log 2}{12} - 3 \log A \right) \frac{1}{N^2} & & \\
&\phantom{=}+ \cdots, & &
\end{align*}
where $\log A$ is the logarithm of the Glaisher-Kinkelin constant, see \eqref{eq:Glaisher-Kinkelin}. In fact, Theorem~\ref{thm:log.n.point.energy.asymptotics} gives the complete asymptotic expansion of $\mathcal{E}_0([-1,1];N)$ as $N\to\infty$. Note that only the $N^2$-term and $(\log N)$-term are affected by a change of the transfinite diameter; i.e., as $N \to \infty$:
\begin{equation*}
\begin{split}
\mathcal{E}_0([a,b]; N) &= W([a,b]) \, N^2 - N \log N - \left( \log 2 + W([a,b]) \right) N - \frac{1}{4} \log N + \frac{13 \log 2}{12} - 3 \log A \\
&\phantom{=}+ \sum_{m=1}^{\infty} \frac{1}{m(m+1)} \left(  1 - 2^{-m} + 4 \left( 1 - 2^{-(m+2)} \right) \frac{B_{m+2}}{m+2} \right) N^{-m}.
\end{split}
\end{equation*}

\section{Asymptotics of the discriminant of the Jacobi polynomial} 
\label{sec:aux.res}

For the proof of Theorem~\ref{thm:Jacobi} we need an asymptotic expansion of the leading coefficient, the values at $\pm 1$, and the discriminant of the Jacobi polynomial. We recall the following facts. The Jacobi polynomials $P_n^{(\alpha, \beta)}(x)$ ($n\geq0$, $\alpha, \beta > -1$) are orthogonal on the interval $[-1,1]$ with the weight function $w(x) = ( 1 - x )^\alpha ( 1 + x )^\beta$ and normalized such that $P_n^{(\alpha,\beta)}(1) = \Pochhsymb{1+\alpha}{n} / n!$. Hence
\begin{equation*}
P_n^{(\alpha,\beta)}(x) = \lambda_n^{(\alpha,\beta)} x^n + \cdots, \qquad \lambda_n^{(\alpha,\beta)} = 2^{-n} \binom{2n+\alpha+\beta}{n}.
\end{equation*}
We note further that $P_n^{(\alpha,\beta)}(-x) = (-1)^n P_n^{(\beta,\alpha)}(x)$. Therefore, $P_n^{(\alpha,\beta)}(-1) = (-1)^n \Pochhsymb{1+\beta}{n} / n!$.

We prove the following Poincar{\'e}-type asymptotic results expressed in terms of the zeta function and the Hurwitz zeta function.

\begin{lem} \label{lem:Jacobi.asymptotics}
Let $\alpha > -1$ and $\beta > -1$. Then 
\begin{align*}
\log \lambda_n^{(\alpha,\beta)} &= \left( \log 2 \right) n - \frac{1}{2} \log n + \left( \alpha + \beta \right) \log 2 - \frac{1}{2} \log \pi \\
&\phantom{=}+ \sum_{m=1}^{\infty} \frac{(-1)^{m-1}}{m} \Big( \left( 1 - 2^{-m} \right) \zetafcn( -m, \alpha + \beta + 1 ) + \zetafcn( - m ) \Big) n^{-m}, \\
\log P_n^{(\alpha,\beta)}(1)
&= \alpha \log n - \log \gammafcn(\alpha+1) + \sum_{m=1}^{\infty} \frac{(-1)^{m}}{m} \, \Big( \zetafcn( -m, \alpha + 1 ) - \zetafcn( -m ) \Big) \, n^{-m}.
\end{align*}
\end{lem}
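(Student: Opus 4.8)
The plan is to reduce both assertions to a single ingredient, the classical Poincar\'e-type (generalized Stirling) expansion
\begin{equation*}
\log \gammafcn(z+a) = \left( z + a - \tfrac{1}{2} \right) \log z - z + \tfrac{1}{2} \log 2\pi + \sum_{m=1}^{\infty} \frac{(-1)^{m}}{m} \, \zetafcn(-m,a) \, z^{-m} \qquad (z \to \infty),
\end{equation*}
valid for any fixed $a$; this is collected among the technical results of the Appendix, and follows from the Bernoulli form of Stirling's series together with $\zetafcn(-m,a) = - \bernoulliB_{m+1}(a)/(m+1)$. Everything beyond this is bookkeeping with Poincar\'e series, for which termwise addition and subtraction, and the rescaling $z = 2n$ (so that $z^{-m} = 2^{-m} n^{-m}$), are all admissible.

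For the leading coefficient I would start from $\lambda_n^{(\alpha,\beta)} = 2^{-n} \binom{2n+\alpha+\beta}{n} = 2^{-n} \gammafcn(2n+\alpha+\beta+1) / \bigl( \gammafcn(n+1) \, \gammafcn(n+\alpha+\beta+1) \bigr)$, so that
\begin{equation*}
\log \lambda_n^{(\alpha,\beta)} = - n \log 2 + \log \gammafcn(2n+\alpha+\beta+1) - \log \gammafcn(n+1) - \log \gammafcn(n+\alpha+\beta+1),
\end{equation*}
and then apply the expansion three times: with $(z,a) = (2n, \alpha+\beta+1)$, with $(z,a) = (n,1)$ (using $\zetafcn(s,1) = \zetafcn(s)$), and with $(z,a) = (n,\alpha+\beta+1)$. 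The only nonroutine point is the factor $\log(2n) = \log 2 + \log n$ produced by the first Gamma term; collecting all $\log 2$ contributions together with the three $\tfrac{1}{2}\log 2\pi$ constants (whose signed sum is $-\tfrac{1}{2}\log 2\pi$) yields the $(\log 2)\, n$ term and the constant $(\alpha+\beta)\log 2 - \tfrac{1}{2}\log\pi$, while the coefficient of $\log n$ reduces to $-\tfrac{1}{2}$ and the remaining terms linear in $n$ cancel. In the series part the coefficient of $n^{-m}$ is $\frac{(-1)^m}{m}\bigl[ 2^{-m}\zetafcn(-m,\alpha+\beta+1) - \zetafcn(-m) - \zetafcn(-m,\alpha+\beta+1) \bigr]$, which equals $\frac{(-1)^{m-1}}{m}\bigl[ (1 - 2^{-m})\zetafcn(-m,\alpha+\beta+1) + \zetafcn(-m) \bigr]$, as claimed.

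For the value at the endpoint $x = 1$ I would use $P_n^{(\alpha,\beta)}(1) = \Pochhsymb{1+\alpha}{n}/n! = \gammafcn(n+\alpha+1)/\bigl( \gammafcn(\alpha+1) \, \gammafcn(n+1) \bigr)$, hence
\begin{equation*}
\log P_n^{(\alpha,\beta)}(1) = \log \gammafcn(n+\alpha+1) - \log \gammafcn(\alpha+1) - \log \gammafcn(n+1),
\end{equation*}
and apply the expansion with $(z,a) = (n,\alpha+1)$ and with $(z,a) = (n,1)$. Here the $-z$ terms and the $\tfrac{1}{2}\log 2\pi$ terms cancel in pairs, the coefficients of $\log n$ subtract to $\alpha$, the constant $-\log\gammafcn(\alpha+1)$ survives, and the coefficient of $n^{-m}$ in the series is $\frac{(-1)^m}{m}\bigl[ \zetafcn(-m,\alpha+1) - \zetafcn(-m) \bigr]$.

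The only genuine subtlety, and the sole place requiring care, is the justification of the generalized Stirling expansion invoked above and --- in the first part --- the legitimacy of re-expanding it in powers of $n^{-1}$ after the substitution $z \mapsto 2n$; once these are secured (in the Appendix), both statements follow by the purely mechanical recombination just described, with no real obstacle remaining.
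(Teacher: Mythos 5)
Your proposal is correct and follows essentially the same route as the paper: both decompose $\lambda_n^{(\alpha,\beta)}$ and $P_n^{(\alpha,\beta)}(1)$ into ratios of Gamma functions and apply the shifted Stirling expansion \eqref{eq:LogGamma.asymptotics} (with the substitution $x=2n$ for the term $\gammafcn(2n+\alpha+\beta+1)$ and with $\zetafcn(-m,1)=\zetafcn(-m)$), then recombine. The bookkeeping you describe, including the signed sum of the $\tfrac12\log 2\pi$ constants and the coefficient of $n^{-m}$, checks out.
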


\begin{proof}
Since 
\begin{equation*}
\log \lambda_n^{(\alpha,\beta)} = - n \log 2 + \log \gammafcn( 2n + \alpha + \beta + 1 ) - \log \gammafcn( n + \alpha + \beta + 1 ) - \log \gammafcn(n + 1),
\end{equation*}
application of \eqref{eq:LogGamma.asymptotics} and simplification gives the first result. 

For the second part we have
\begin{equation*}
\log P_n^{(\alpha,\beta)}(1) = - \log \gammafcn(\alpha+1) + \log \gammafcn(n+\alpha+1) - \log \gammafcn(n+1)
\end{equation*}
and application of \eqref{eq:LogGamma.asymptotics} yields the second part. 

In either part we used $\zetafcn(-m,1) = \zetafcn(-m)$ for $m \geq 1$.  
\end{proof}

The connection between the energy optimization problem and the zeros of certain Jacobi polynomials is established in the following theorem. Uniqueness of the maximal configuration also follows from this fact.
\begin{thm}[{\cite[Thm.~6.7.1]{Sz1939}}] \label{thm:Jacobi.electrostatic.interpretation}
Let $p > 0$ and $q > 0$, and let $\{x_1, \dots, x_n\}$ be a system of real numbers in the interval $[-1,1]$ for which the expression \eqref{Jacobi.expr} becomes a maximum. Then $x_1, \dots, x_n$ are the zeros of the Jacobi polynomial $P_n^{(\alpha, \beta)}(x)$, where $\alpha = 2 p - 1$, $\beta = 2q - 1$. 
\end{thm}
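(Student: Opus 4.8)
The plan is to reproduce Stieltjes' classical reduction of the maximization problem to Jacobi's differential equation. First I would pin down the basic structure of a maximizer: since $T_n$ in \eqref{Jacobi.expr} is continuous on the compact cube $[-1,1]^n$ and strictly positive at any configuration of $n$ distinct points of the open interval $(-1,1)$, its maximum is attained and is positive. Hence a maximizing configuration $\{x_1,\dots,x_n\}$ can have neither two coinciding coordinates (which would kill the factor $\prod_{j<k}|x_j-x_k|$) nor a coordinate at $\pm 1$ (which would kill a boundary factor $(1-x_i)^p$ or $(1+x_i)^q$); relabelling, $-1<x_1<\cdots<x_n<1$. At such an interior maximum all partial derivatives of $\log T_n$ vanish, which yields the electrostatic equilibrium equations
\[
  \sum_{j\ne i}\frac{1}{x_i-x_j}=\frac{p}{1-x_i}-\frac{q}{1+x_i}=\frac{(p-q)+(p+q)x_i}{1-x_i^2},\qquad i=1,\dots,n .
\]

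The key step is the symmetric-function trick. I would introduce the monic polynomial $f(x)=\prod_{k=1}^n(x-x_k)$ of degree $n$ and use the elementary identity $f''(x_i)/f'(x_i)=2\sum_{j\ne i}1/(x_i-x_j)$ (obtained by writing $f=(x-x_i)g$ with $g=\prod_{k\ne i}(x-x_k)$) to recast the equilibrium equations as $(1-x_i^2)f''(x_i)=2[(p-q)+(p+q)x_i]f'(x_i)$ for every $i$. Then I would consider
\[
  G(x):=(1-x^2)f''(x)-2\bigl[(p-q)+(p+q)x\bigr]f'(x)+C\,f(x),
\]
choosing the constant $C:=n(n-1)+2n(p+q)=n(n+2p+2q-1)$ precisely so that the $x^n$-coefficients of the three summands cancel, whence $\deg G\le n-1$. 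Since $G(x_i)=0$ for all $i$, the polynomial $G$ has the $n$ distinct zeros $x_1,\dots,x_n$ and therefore $G\equiv 0$. Thus $f$ solves
\[
  (1-x^2)f''+\bigl[(\beta-\alpha)-(\alpha+\beta+2)x\bigr]f'+n(n+\alpha+\beta+1)\,f=0
\]
with $\alpha=2p-1$ and $\beta=2q-1$, because $\beta-\alpha=-2(p-q)$, $\alpha+\beta+2=2(p+q)$, and $n(n+\alpha+\beta+1)=n(n+2p+2q-1)$; this is exactly Jacobi's differential equation for $P_n^{(\alpha,\beta)}$.

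To finish, I would invoke uniqueness of the polynomial solution: for $\alpha,\beta>-1$ the eigenvalues $-k(k+\alpha+\beta+1)$, $k=0,1,2,\dots$, of the Jacobi differential operator are pairwise distinct, so expanding $f$ in the basis $P_0^{(\alpha,\beta)},\dots,P_n^{(\alpha,\beta)}$ forces every coefficient below the top one to vanish; comparing leading coefficients gives $f=P_n^{(\alpha,\beta)}/\lambda_n^{(\alpha,\beta)}$. Hence $x_1,\dots,x_n$ are the zeros of $P_n^{(\alpha,\beta)}$, which are known to lie in $(-1,1)$. Since, conversely, every maximizer of \eqref{Jacobi.expr} must coincide as an unordered set with this zero set, the maximizing configuration is unique.

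As for the difficulty, the argument is essentially routine once it is set up — the one genuinely clever ingredient, due to Stieltjes, is the polynomial $G$ together with the degree-cancelling choice of $C$, which packages the $n$ scalar equilibrium conditions into a single second-order ODE. The only points that need a little care are verifying that the maximizer is interior with simple zeros (so that the first-order conditions are available and $G$ genuinely has $n$ distinct roots) and the distinctness of the Jacobi eigenvalues (so that the differential equation determines $f$ up to a scalar).
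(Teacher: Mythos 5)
Your proof is correct and is essentially the classical Stieltjes argument that the paper relies on by citing Szeg\H{o}'s Theorem~6.7.1 without reproving it: interiority and simplicity of the maximizer, the equilibrium equations, the identity $f''(x_i)=2f'(x_i)\sum_{j\neq i}(x_i-x_j)^{-1}$, the auxiliary polynomial $G$ with the degree-cancelling constant, and uniqueness via the distinct Jacobi eigenvalues. No gaps.
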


\begin{rmk}
In the particular case of $p = q = 1$, it follows from the well-known relations (cf. \cite[Ch.~18]{DLMF2021.06.15})
\begin{equation*}
P_n^{(1,1)}( x ) = \frac{2}{n+2} \, \GegenbauerC_n^{(3/2)}( x ) = \frac{2}{n+2} \, \frac{\dd \LegendreP_{n+1}}{\dd x}(x)
\end{equation*}
that the unique maximizing configuration for \eqref{Jacobi.expr} in the interval $[-1,1]$ can be characterized as the set of the zeros of the Jacobi polynomial $P_n^{(1,1)}$, the zeros of the Gegenbauer polynomial $\GegenbauerC_n^{(3/2)}$, or the extremas of the Legendre polynomial $\LegendreP_{n+1}$.
\end{rmk}


An explicit formula for the discriminant of $P_n^{(\alpha,\beta)}(x) = \lambda_n^{(\alpha,\beta)} ( x - x_{1,n} ) \cdots ( x - x_{n,n} )$, defined by 
\begin{equation} \label{eq:D.n.alpha.beta}
D_n^{(\alpha,\beta)} \DEF \left[ \lambda_n^{(\alpha,\beta)} \right]^{2n-2} \mathop{\prod_{j=1}^n \prod_{k=1}^n}_{j<k} \left( x_{j,n} - x_{k,n} \right)^2,
\end{equation}
can be obtained without computing the zeros of Jacobi polynomials:

\begin{thm}[{\cite[Thm.~6.71]{Sz1939}}] \label{thm:Jacobi.electrostatic.interpretation.discriminant} Let $\alpha>-1$ and $\beta>-1$. Then 
\begin{align*}
D_n^{(\alpha,\beta)} 
&= 2^{-n(n-1)} \prod_{\nu=1}^n \nu^{\nu-2n+2} \left( \nu + \alpha \right)^{\nu-1} \left( \nu + \beta \right)^{\nu-1} \left( \nu + n + \alpha + \beta \right)^{n-\nu}.
\end{align*}
\end{thm}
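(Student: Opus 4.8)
The plan is to exploit the classical identity relating the discriminant of an arbitrary polynomial to the resultant of the polynomial and its derivative, together with the second-order linear ODE satisfied by the Jacobi polynomials. Recall that for $P_n^{(\alpha,\beta)}(x) = \lambda_n^{(\alpha,\beta)} \prod_{k=1}^n (x - x_{k,n})$ one has
\begin{equation*}
D_n^{(\alpha,\beta)} = \left[ \lambda_n^{(\alpha,\beta)} \right]^{2n-2} \prod_{j<k} (x_{j,n} - x_{k,n})^2 = \frac{(-1)^{n(n-1)/2}}{\lambda_n^{(\alpha,\beta)}} \, \mathrm{Res}\!\left( P_n^{(\alpha,\beta)}, \frac{d}{dx} P_n^{(\alpha,\beta)} \right),
\end{equation*}
and that the resultant of two polynomials factors as a product over the roots of one of them. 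Concretely, $\mathrm{Res}(P_n^{(\alpha,\beta)}, (P_n^{(\alpha,\beta)})') = \big[\lambda_n^{(\alpha,\beta)}\big]^{\,n-1} \prod_{k=1}^n (P_n^{(\alpha,\beta)})'(x_{k,n})$, so the whole computation reduces to evaluating the derivative of $P_n^{(\alpha,\beta)}$ at its own zeros and forming the product.

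First I would invoke the Jacobi differential equation
\begin{equation*}
(1-x^2)\, y'' + \big(\beta - \alpha - (\alpha+\beta+2)x\big)\, y' + n(n+\alpha+\beta+1)\, y = 0,
\end{equation*}
evaluated at a zero $x_{k,n}$, where the $y$-term drops out, giving
\begin{equation*}
(1 - x_{k,n}^2)\, (P_n^{(\alpha,\beta)})''(x_{k,n}) = \big((\alpha+\beta+2)x_{k,n} - (\beta-\alpha)\big)\, (P_n^{(\alpha,\beta)})'(x_{k,n}).
\end{equation*}
This relation does not directly give $(P_n^{(\alpha,\beta)})'(x_{k,n})$ but it is the standard input for Stieltjes' recursive evaluation; the cleaner route, which I would take, is to use the known closed form for the discriminant of a polynomial satisfying a second-order ODE $\sigma(x) y'' + \tau(x) y' + \lambda_n y = 0$ with $\sigma$ quadratic — a formula going back to Stieltjes and Hilbert (see Szeg\H{o}, proof of Thm.~6.71), which expresses the discriminant in terms of $\sigma$, the leading coefficients, and certain products of linear factors. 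Substituting $\sigma(x) = 1-x^2 = -(x-1)(x+1)$ and the eigenvalue $\lambda_n = n(n+\alpha+\beta+1)$ one obtains a product formula over $\nu = 1, \dots, n$ whose factors are exactly $\nu$, $\nu+\alpha$, $\nu+\beta$, and $\nu+n+\alpha+\beta$ raised to the stated exponents, once one also feeds in $\lambda_n^{(\alpha,\beta)} = 2^{-n}\binom{2n+\alpha+\beta}{n}$ to clear the leading-coefficient powers.

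The main obstacle is the careful bookkeeping of the exponents: tracking how the powers $2n-2$ from the definition of $D_n^{(\alpha,\beta)}$, the power $n-1$ from the resultant-over-roots formula, and the telescoping Gamma-function products arising from $\prod_\nu (\nu+\alpha)$, $\prod_\nu(\nu+\beta)$, etc., combine to yield precisely $\nu^{\nu - 2n + 2}$, $(\nu+\alpha)^{\nu-1}$, $(\nu+\beta)^{\nu-1}$, $(\nu+n+\alpha+\beta)^{n-\nu}$, together with the prefactor $2^{-n(n-1)}$ coming from $\sigma(x) = -(x-1)(x+1)$ and the power of $\lambda_n^{(\alpha,\beta)}$. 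One must also confirm that the sign $(-1)^{n(n-1)/2}$ and the constant $\binom{2n+\alpha+\beta}{n}$ cancel out cleanly against the contributions from the endpoints $\pm 1$; this is a routine but error-prone check. Since Szeg\H{o}'s Thm.~6.71 is precisely this statement, it suffices to cite it, but the proof sketch above is how one would reconstruct it: discriminant $\to$ resultant with the derivative $\to$ eliminate $y''$ via the ODE $\to$ assemble the product. I expect the exponent bookkeeping to be the only genuinely delicate point.
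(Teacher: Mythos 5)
The paper offers no proof of this statement at all: it is quoted verbatim as \cite[Thm.~6.71]{Sz1939}, so a citation is all the paper itself supplies, and your closing remark that ``it suffices to cite it'' is consistent with that. Your reduction is also the correct classical one, with the constants right: $D_n^{(\alpha,\beta)} = (-1)^{n(n-1)/2}\,[\lambda_n^{(\alpha,\beta)}]^{-1}\,\mathrm{Res}(P_n^{(\alpha,\beta)},(P_n^{(\alpha,\beta)})')$ and $\mathrm{Res}(P_n^{(\alpha,\beta)},(P_n^{(\alpha,\beta)})') = [\lambda_n^{(\alpha,\beta)}]^{n-1}\prod_k (P_n^{(\alpha,\beta)})'(x_{k,n})$, so everything hinges on evaluating $\prod_{k=1}^n (P_n^{(\alpha,\beta)})'(x_{k,n})$.

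Viewed as a proof rather than a citation, however, the sketch has a genuine gap exactly at that point. The differential equation evaluated at a zero only yields the ratio $(P_n^{(\alpha,\beta)})''(x_{k,n})/(P_n^{(\alpha,\beta)})'(x_{k,n})$; it cannot by itself determine the value of $(P_n^{(\alpha,\beta)})'(x_{k,n})$, and your appeal to ``the known closed form for the discriminant of a polynomial satisfying a second-order ODE'' is in substance an appeal to the theorem being proved. The missing ingredient in the Stieltjes--Hilbert--Szeg\H{o} argument is a separate evaluation of the product of derivatives: one uses the structure relation $(1-x^2)\,(P_n^{(\alpha,\beta)})'(x) = A_n(x)\,P_n^{(\alpha,\beta)}(x) + B_n\,P_{n-1}^{(\alpha,\beta)}(x)$ (with $A_n$ linear and $B_n$ constant) to convert $\prod_k (1-x_{k,n}^2)(P_n^{(\alpha,\beta)})'(x_{k,n})$ into $B_n^{\,n}\prod_k P_{n-1}^{(\alpha,\beta)}(x_{k,n})$, i.e.\ into a resultant $\mathrm{Res}(P_n^{(\alpha,\beta)},P_{n-1}^{(\alpha,\beta)})$, which is then evaluated by a recursion in $n$ coming from the three-term recurrence; the factor $\prod_k(1-x_{k,n}^2) = (-1)^n P_n^{(\alpha,\beta)}(1)P_n^{(\alpha,\beta)}(-1)/[\lambda_n^{(\alpha,\beta)}]^2$ is supplied by the known endpoint values $\Pochhsymb{1+\alpha}{n}/n!$ and $(-1)^n\Pochhsymb{1+\beta}{n}/n!$. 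Without this (or an equivalent induction), the exponent bookkeeping you defer to is not merely delicate --- there is nothing yet to bookkeep. Since the theorem is anyway being imported from Szeg\H{o}, the citation suffices for the paper's purposes, but the sketch as written would not reconstruct the formula.
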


The logarithm of the discriminant of the Jacobi polynomials admits the following Poincar{\'e}-type asymptotic expansion.
The Glaisher-Kinkelin constant $A$ is given in \eqref{eq:Glaisher-Kinkelin} and the negapolygamma function $\digammafcn^{(-2)}$ is given in \eqref{eq:negadigamma}.
\begin{lem} \label{lem:Jacobi.discr.asymptotics}
Let $\alpha > -1$ and $\beta > -1$. Then for every integer $K \geq1$ there holds
\begin{equation*}
\begin{split}
\log D_n^{(\alpha,\beta)} &= (\log 2 ) n^2 + \left( 2 \left( \alpha + \beta \right) \log 2 - \log \pi¸ \right) n + \frac{1}{2} \left( \frac{5}{2} - \left( \alpha + 1 \right)^2 - \left( \beta + 1 \right)^2 \right) \log n + C(\alpha,\beta) \\
&\phantom{=}- \sum_{m=1}^{\infty} \frac{(-1)^{m-1}}{m} \, \Psi_m(\alpha, \beta) \, n^{-m},
\end{split}
\end{equation*}
where
\begin{align*}
C(\alpha,\beta) 
&\DEF - \frac{1}{8} - \frac{1}{2} \left( \alpha + \beta + \frac{1}{2} \right)^2 + \frac{1}{2} \left( \frac{11}{6} + \left( \alpha + \beta \right)^2 \right) \log 2 + \log \pi + 3 \log A  \\
&\phantom{\DEF}+ \left( \alpha + 1 \right) \log \gammafcn(\alpha + 1) - \digammafcn^{(-2)}( \alpha + 1 ) + \left( \beta + 1 \right) \log \gammafcn(\beta + 1) - \digammafcn^{(-2)}( \beta + 1 ), \\
\Psi_m(\alpha, \beta) 
&\DEF - \frac{2m+1}{m+1} \zetafcn( -m - 1 ) - 2 \zetafcn( - m ) \\
&\phantom{\DEF}+ \left( \alpha + 1 \right) \zetafcn( -m, \alpha + 1 ) - \frac{\zetafcn( -m-1, \alpha + 1 )}{m+1} + \left( \beta + 1 \right) \zetafcn( -m, \beta + 1 ) - \frac{\zetafcn( -m-1, \beta + 1 )}{m+1} \\
&\phantom{\DEF}- \frac{\left( 2 - 2^{-m} \right) m + 1 - 2^{-m}}{m+1} \zetafcn( -m-1, \alpha + \beta + 1 ) + \left( \alpha + \beta \right) \left( 1 - 2^{-m} \right) \zetafcn( -m, \alpha + \beta + 1 ).
\end{align*}
\end{lem}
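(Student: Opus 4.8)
The plan is to start from the closed form for the discriminant given in Theorem~\ref{thm:Jacobi.electrostatic.interpretation.discriminant}. Taking logarithms there,
\begin{equation*}
\begin{split}
\log D_n^{(\alpha,\beta)} = -n(n-1)\log 2 + \sum_{\nu=1}^n \Big[ &(\nu-2n+2)\log\nu + (\nu-1)\log(\nu+\alpha) \\
&{}+ (\nu-1)\log(\nu+\beta) + (n-\nu)\log(\nu+n+\alpha+\beta) \Big],
\end{split}
\end{equation*}
so the task reduces to the asymptotic evaluation of finitely many sums, each of the shape $\sum_{\nu=1}^n (a\nu+b)\log(\nu+c)$ with fixed shift $c\in\{0,\alpha,\beta\}$, together with the single \emph{diagonal-crossing} sum $\sum_{\nu=1}^n (n-\nu)\log(\nu+n+\alpha+\beta)$, in which the argument of the logarithm grows like $2n$.

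First I would dispatch the sums with a fixed shift. Writing, for instance, $(\nu-1)\log(\nu+\alpha)=(\nu+\alpha)\log(\nu+\alpha)-(\alpha+1)\log(\nu+\alpha)$ (and analogously for $\beta$, and $(\nu-2n+2)\log\nu=\nu\log\nu-2(n-1)\log\nu$ for the $c=0$ piece), everything is expressed through two building blocks: $\sum_{\nu=1}^n\log(\nu+c)=\log\gammafcn(n+1+c)-\log\gammafcn(1+c)$, whose expansion is the Stirling-type series \eqref{eq:LogGamma.asymptotics} already used in the proof of Lemma~\ref{lem:Jacobi.asymptotics}; and the exact identity $\sum_{\nu=1}^n (\nu+c)\log(\nu+c)=\zetafcn^\prime(-1,n+1+c)-\zetafcn^\prime(-1,1+c)$, for which I would invoke the Poincar\'e-type asymptotic expansion of $\zetafcn^\prime(-1,z)$ as $z\to\infty$ collected in the Appendix (obtained from the Hurwitz-zeta asymptotic by differentiating in the first variable at $s=-1$ and re-expanding $(n+c)^{\text{power}}$ and $\log(n+c)$ around $n$). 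The constants $\zetafcn^\prime(-1,1+c)$ are rewritten via \eqref{eq:negadigamma}, which is what produces the terms $\digammafcn^{(-2)}(\alpha+1)$ and $\digammafcn^{(-2)}(\beta+1)$ in $C(\alpha,\beta)$, while the $\zetafcn^\prime(-1)=1/12-\log A$ contributed by the three constituent blocks supplies the Glaisher--Kinkelin term $3\log A$ (see \eqref{eq:Glaisher-Kinkelin}).

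The crossing sum is the delicate part. Substituting $\mu=\nu+n$ turns it into $\sum_{\mu=n+1}^{2n}(2n-\mu)\log(\mu+\gamma)$ with $\gamma\DEF\alpha+\beta$, which after the elementary rearrangement $(2n-\mu)\log(\mu+\gamma)=(2n+\gamma)\log(\mu+\gamma)-(\mu+\gamma)\log(\mu+\gamma)$ and the splitting $\sum_{\mu=n+1}^{2n}=\sum_{\mu=1}^{2n}-\sum_{\mu=1}^{n}$ becomes a combination of the same two building blocks evaluated at $2n$ and at $n$. Two features then emerge. First, the constants $\log\gammafcn(1+\gamma)$ and $\zetafcn^\prime(-1,1+\gamma)$ cancel in the difference, which is exactly why no $\digammafcn^{(-2)}(\alpha+\beta+1)$ appears in $C(\alpha,\beta)$. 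Second, in the partial sums up to $2n$ every power $n^{-m}$ is accompanied by a factor $2^{-m}$, since $(2n)^{-m}=2^{-m}n^{-m}$ and $\log(2n)=\log 2+\log n$; this accounts for all the $(1-2^{-m})$ factors in $\Psi_m(\alpha,\beta)$ and for the $\log 2$ contributions in the $n^2$-, $n$-, and constant terms. The global factor $2n+\gamma$ multiplying $\sum_{\mu=n+1}^{2n}\log(\mu+\gamma)$ shifts the index of that sum's tail by one, and it is the combination of this shifted tail with the tail of $\sum(\mu+\gamma)\log(\mu+\gamma)$ (taken at $2n$ and at $n$) that assembles the numerator $(2-2^{-m})m+1-2^{-m}$ of the $\zetafcn(-m-1,\alpha+\beta+1)$-coefficient, and the $(\alpha+\beta)(1-2^{-m})\zetafcn(-m,\alpha+\beta+1)$ term.

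Finally I would collect contributions ordered by $n^2$, $n$, $\log n$, the constant, and $n^{-m}$. A necessary consistency check is that all $n^2\log n$ and $n\log n$ terms cancel; for $n^2\log n$ the coefficients are $-3/2$ from the $c=0$ block, $+1/2$ each from the $c=\alpha$ and $c=\beta$ blocks, and $+1/2$ from the crossing sum, with sum $0$, leaving precisely the $\tfrac12\big(\tfrac52-(\alpha+1)^2-(\beta+1)^2\big)\log n$ term, and similarly the $\alpha,\beta$-dependent $n\log n$ terms cancel. The remaining bookkeeping then yields $C(\alpha,\beta)$ and $\Psi_m(\alpha,\beta)$. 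Since every constituent expansion is Poincar\'e-type with an explicit remainder of one higher order and only finitely many of them are combined with polynomial coefficients, truncation at any $K\geq1$ produces an error $\BigOh(n^{-K-1})$, which gives the asserted form. The main obstacle is not any single step but the sheer bookkeeping in the crossing sum: tracking how $\log(2n)=\log 2+\log n$ redistributes across all orders, verifying the cancellation of the $\alpha,\beta$-dependent $\log n$-times-polynomial terms, and correctly recombining the shifted tail to obtain the exact coefficients $\Psi_m(\alpha,\beta)$.
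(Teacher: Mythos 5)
Your proposal is correct and follows essentially the same route as the paper: the same decomposition of $\log D_n^{(\alpha,\beta)}$ into the sums $\sum(\nu-2n+2)\log\nu$, $\sum(\nu-1)\log(\nu+\alpha)$, $\sum(\nu-1)\log(\nu+\beta)$ and the crossing sum, the same reduction to $\log\gammafcn$ and $\zetafcn^\prime(-1,\cdot)$ building blocks via the identity \eqref{eq:logsum2}, and the same use of \eqref{eq:LogGamma.asymptotics} and \eqref{eq:1st.HurwitzZeta.s.derivative.at.minus.1} followed by term-by-term bookkeeping. Your consistency checks (cancellation of the $n^2\log n$ coefficients $-\tfrac32+\tfrac12+\tfrac12+\tfrac12=0$, the origin of the $(1-2^{-m})$ factors and of the absent $\digammafcn^{(-2)}(\alpha+\beta+1)$ term) all match the paper's intermediate expansions of $\mathfrak{A}_n$, $\mathfrak{B}_n$, and $\mathfrak{C}_n$.
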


\begin{proof}
First, we observe that differentiating the identity
\begin{equation*}
\sum_{k=m+1}^n \left( k + x + a \right)^{-s} = \zetafcn(s, m + x + a + 1 ) - \zetafcn(s, n + x + a + 1), \qquad 0 \leq m < n,
\end{equation*}
with respect to $s$ and setting $s=-1$ gives the following formula (using $\zetafcn^\prime( -1, z ) := \frac{\partial}{\partial s} \zetafcn( s, z ) |_{s = - 1}$)
\begin{equation} \label{eq:logsum2}
\sum_{k=m+1}^n \left( k + x + a \right) \log ( k + x + a ) = \zetafcn^\prime(-1, n + x + a + 1 ) - \zetafcn^\prime(-1, m + x + a + 1 ), \qquad 0 \leq m < n.
\end{equation}
Hence
\begin{equation*}
\log D_n^{(\alpha,\beta)} = - n \left( n - 1 \right) \log 2 + \mathfrak{A}_n + \mathfrak{B}_n(\alpha) + \mathfrak{B}_n(\beta) + \mathfrak{C}_n(\alpha+\beta),
\end{equation*}
where for $\alpha>-1$ and $b>-2$:
\begin{align*}
\mathfrak{A}_n &\DEF \sum_{\nu=1}^n \left( \nu - 2 n + 2 \right) \log \nu = \zetafcn^\prime(-1, n + 1 ) - \zetafcn^\prime(-1) - 2 \left( n - 1 \right) \log \gammafcn(n+1), \\
\mathfrak{B}_n(\alpha) &\DEF \sum_{\nu=1}^n \left( \nu - 1 \right) \log (\nu + \alpha) = \zetafcn^\prime(-1, n + \alpha + 1 ) - \zetafcn^\prime(-1, \alpha + 1 ) - \left( \alpha + 1 \right) \log \Pochhsymb{\alpha+1}{n}, \\
\mathfrak{C}_n(b) &\DEF \sum_{\nu=1}^n \left( n - \nu \right) \log ( \nu + n + b ) = \left( 2 n + b \right) \log \Pochhsymb{n+b+1}{n} - \zetafcn^\prime(-1, 2 n + b + 1 ) + \zetafcn^\prime(-1, n + b + 1 ).
\end{align*}

The asymptotic forms follow from applying \eqref{eq:LogGamma.asymptotics} and \eqref{eq:1st.HurwitzZeta.s.derivative.at.minus.1}. Simplification is done with the help of Mathematica.

First, we get the Poincar{\'e}-type asymptotics 
\begin{equation*}
\begin{split}
\mathfrak{A}_n 
&= - \frac{3}{2} n^2 \log n + \frac{7}{4} n^2 + \frac{3}{2} n \log n - \left( 2 + \log( 2 \pi ) \right) n + \frac{13}{12} \log n + \log A - \frac{1}{6} + \log( 2 \pi ) \\
&\phantom{=}+ \sum_{m=1}^\infty \frac{(-1)^{m}}{m} \left( 2 \zetafcn( -m ) + \frac{2m + 1}{m+1} \zetafcn( -m - 1 ) \right) n^{-m}, 
\end{split}
\end{equation*}
where $A$ is the {\em Glaisher-Kinkelin} constant. We used $\zetafcn^\prime( - 1 ) = \frac{1}{12} - \log A$.

Furthermore,
\begin{equation*}
\begin{split}
\mathfrak{B}_n(\alpha) 
&= \frac{1}{2} n^2 \log n - \frac{1}{4} n^2 - \frac{1}{2} n \log n + \left( \alpha + 1 \right) n + \frac{1}{2} \left( \frac{1}{6} - \left( \alpha + 1 \right)^2 \right) \log n \\
&\phantom{=}+ \log A - \digammafcn^{(-2)}( \alpha + 1 ) + \left( \alpha + 1 \right) \log \gammafcn( \alpha + 1 )  \\
&\phantom{=}+ \sum_{m=1}^\infty \frac{(-1)^{m-1}}{m} \left( \left( \alpha + 1 \right) \zetafcn( -m, \alpha + 1 ) - \frac{\zetafcn( -m - 1, \alpha + 1 )}{m+1}  \right) n^{-m}.
\end{split}
\end{equation*}
Here, we used the negapolygamma function defined in \eqref{eq:negadigamma} to simplify the constant term.

Furthermore,
\begin{equation*}
\begin{split}
\mathfrak{C}_n(b)
&= \frac{1}{2} n^2 \log n + \left( 2 \log 2 - \frac{5}{4} \right) n^2 - \frac{1}{2} n \log n + \left( 2 \log 2 - 1 \right) b \, n + \frac{1}{2} \left( b^2 - \frac{1}{6} \right) \log 2 - \frac{1}{2} \left( b \left( b + 1 \right) + \frac{1}{6} \right) \\
&\phantom{=}+ \sum_{m=1}^\infty \frac{(-1)^m}{m} \left( \frac{ \frac{2 - 2^{-m}}{1 - 2^{-m}} \, m + 1}{m+1} \zetafcn( -m - 1, b + 1 ) - b \zetafcn( -m, b + 1 ) \right) \left( 1 - 2^{-m} \right)  n^{-m}.
\end{split}
\end{equation*}

Putting everything together, we arrive at the desired result. 
\end{proof}


\section{Proofs of main results}
\label{sec:proofs.main.res}


\begin{proof}[Proof of Theorem~\ref{thm:Jacobi}]
By Theorem~\ref{thm:Jacobi.electrostatic.interpretation}, the elliptic $(p,q)$-Fekete $n$-point configuration in $[-1,1]$ is give by the zeros of the Jacobi polynomial  $P_n^{(\alpha,\beta)}$ for $\alpha = 2 p - 1$ and $\beta = 2 q - 1$. 
We set $\alpha = 2 p - 1$ and $\beta = 2 q - 1$.
Let $x_{1,n},\dots, x_{n,n}$ denote the $n$ zeros of $P_n^{(\alpha,\beta)}$. 
From \eqref{eq:D.n.alpha.beta} and Theorem~\ref{thm:Jacobi.electrostatic.interpretation.discriminant} it follows that
\begin{equation} \label{eq:T.n.Jacobi}
T_n(x_{1,n},\dots, x_{n,n}) = \frac{\left[ P_n^{(\alpha,\beta)}(1) \right]^p}{\left[ \lambda_n^{(\alpha,\beta)} \right]^p}  \frac{\sqrt{D_n^{(\alpha,\beta)}}}{\left[ \lambda_n^{(\alpha,\beta)} \right]^{n-1}} \frac{\left[ (-1)^n P_n^{(\alpha,\beta)}(-1) \right]^q}{\left[ \lambda_n^{(\alpha,\beta)} \right]^q}
\end{equation}
and therefore (recall, $\alpha = 2p-1$ and $\beta = 2q-1$)
\begin{equation} \label{eq:explicit.cal.L.interval.p.q.n}
\begin{split}
\mathcal{L}([-1,1], q, p; n) 
&= 2 \left( n + p + q - 1 \right) \log \lambda_n^{(\alpha,\beta)} - \log D_n^{(\alpha,\beta)} \\
&\phantom{=}- 2p \, \log P_n^{(\alpha,\beta)}(1) - 2q \, \log P_n^{(\beta,\alpha)}(1).
\end{split}
\end{equation}
Utilizing Lemma~\ref{lem:Jacobi.asymptotics} and Lemma~\ref{lem:Jacobi.discr.asymptotics}, we get the desired result with the help of Mathematica. 
\end{proof}

\begin{proof}[Proof of Theorem~\ref{thm:Jacobi.log.energy}]
Recall that $\alpha=2p-1$ and $\beta=2q-1$. From \eqref{eq:D.n.alpha.beta} we obtain
\begin{equation} \label{eq:E.0.Jacobi.aux}
E_0(x_{1,n}, \dots, x_{n,n}) = 2 \left( n - 1 \right) \log \lambda_n^{(\alpha,\beta)} - \log D_n^{(\alpha,\beta)}.
\end{equation}
Utilizing Lemma~\ref{lem:Jacobi.asymptotics} and Lemma~\ref{lem:Jacobi.discr.asymptotics}, we get the desired result with the help of Mathematica. 
\end{proof}

\begin{proof}[Proof of Theorem~\ref{thm:log.n.point.energy.asymptotics}]
Suppose $p>0$ and $q>0$. Set $\alpha = 2p-1$ and $\beta = 2q-1$. Let $\omega_n=\{x_{1,n},\dots,x_{n,n}\}$ be an elliptic $(p,q)$-Fekete $n$-point configurations in $[-1,1]$. Rewriting \eqref{eq:log.energy} and using Theorem~\ref{thm:Jacobi.electrostatic.interpretation}, we get
\begin{align*}
E_0(\omega_n \cup \{-1, +1\}) 
&= E_0(\omega_n) + 2 \sum_{k=1}^n \log \frac{1}{\left| -1 - x_{k,n} \right|} + 2 \sum_{k=1}^n \log \frac{1}{\left| 1 - x_{k,n} \right|} + 2 \log \frac{1}{\left| - 1 - 1 \right|} \\
&= E_0(\omega_n) - 2 \log \left| \prod_{k=1}^n \left( -1 -x_{k,n} \right) \right| - 2 \log \left| \prod_{k=1}^n \left( 1 -x_{k,n}  \right) \right| - 2 \log 2 \\
&= E_0(\omega_n) - 2 \log \left| \frac{P_n^{(\alpha,\beta)}(-1)}{\lambda_n^{(\alpha,\beta)}} \right| - 2 \log \left| \frac{P_n^{(\alpha,\beta)}(1)}{\lambda_n^{(\alpha,\beta)}} \right| - 2 \log 2 \\
&= 2 \left( n + 1 \right) \log \lambda_n^{(\alpha,\beta)} - \log D_n^{(\alpha,\beta)} - 2 \log P_n^{(\beta,\alpha)}(1) - 2 \log P_n^{(\alpha,\beta)}(1) - 2 \log 2.
\end{align*}
The substitution for $E_0(\omega_n)$ follows from \eqref{eq:E.0.Jacobi.aux}. 

For $p = q = 1$ and $n = N - 2$, we get 
\begin{equation} \label{eq:min.log.energy.of.interval}
\mathcal{E}_0([-1,1]; N) = 2 \left( N - 1 \right) \log \lambda_{N-2}^{(1,1)} - \log D_{N-2}^{(1,1)} - 4 \log P_{N-2}^{(1,1)}(1) - 2 \log 2.
\end{equation}
The asymptotic expansions of Lemma~\ref{lem:Jacobi.asymptotics} and Lemma~\ref{lem:Jacobi.discr.asymptotics} are not in terms of the new asymptotic variable $N$. Instead we simplify the right-hand side above further and use ideas from the proof of Lemma~\ref{lem:Jacobi.discr.asymptotics}. Combining the logarithmic terms and simplification yields
\begin{equation*}
\mathcal{E}_0([-1,1]; N) = E_0(\omega_n \cup \{-1, +1\}) = - \log \Delta_N([-1,1]),
\end{equation*}
where
\begin{equation*}
\Delta_N([-1,1]) = 2^{N(N-1)} N^N \left( \prod_{k=1}^{N-1} k^{3k} \right) \left( \prod_{k=N-1}^{2(N-1)} k^{-k} \right).
\end{equation*}
Hence, using \eqref{eq:logsum2},
\begin{equation*}
\begin{split}
\mathcal{E}_0([-1,1]; N) 
&= - N ( N - 1 ) \log 2 - N \log N + 3 \zetafcn^\prime( -1, 1 ) \\
&\phantom{=}- 3 \zetafcn^\prime( -1, N ) - \zetafcn^\prime( -1, N - 1 ) + \zetafcn^\prime( -1, 2N-1 ). 
\end{split}
\end{equation*}
Application of \eqref{eq:1st.HurwitzZeta.s.derivative.at.minus.1} and simplification gives the desired result. We used the following simplification (taking into account that Bernoulli numbers $B_k$ with odd integers $k \geq 3$ vanish):
\begin{align*}
\left(1 - 2^{-m} \right) \zetafcn( -m-1 -1 ) + 3 \zetafcn( -m-1, 0 ) 
&= - \left(1 - 2^{-m} \right) \frac{\bernoulliB_{m+2}(-1)}{m+2} - 3 \frac{B_{m+2}}{m+2} \\
&= (-1)^{m-1} \left(  1 - 2^{-m} + 4 \left( 1 - 2^{-(m+2)} \right) \frac{B_{m+2}}{m+2} \right).
\end{align*}
\end{proof}

\appendix

\section{Basic asymptotic expansions}

\subsection{Gamma function asymptotics}

We use the Poincar{\'e}-type asymptotics (cf., eg., \cite[Eq.~5.11.8]{DLMF2021.06.15})
\begin{equation} \label{eq:LogGamma.asymptotics}
\log \gammafcn(x + a) = \left( x + a - 1 / 2 \right) \log x - x + \frac{1}{2} \log( 2 \pi ) - \sum_{m=1}^{\infty} \frac{(-1)^{m-1}}{m} \, \zetafcn( - m, a ) \, x^{-m}
\end{equation}
as $x \to \infty$ and $a \in \mathbb{R}$ fixed. Observe that
\begin{equation*}
\zetafcn( -m, a ) = - \frac{\bernoulliB_{m+1}( a )}{m+1}, \qquad m \in \mathbb{N}_0.
\end{equation*}

\subsection{$s$-derivative of the Hurwitz zeta function}

We need the asymptotic expansion as $n \to \infty$ of $\frac{\partial}{\partial s}\zetafcn(s,n+a) |_{s=-1}$. Here, the asymptotic variable is shifted by a fixed (positive) real number. In case of $a = 0$, see, e.g., \cite[§25.11(xii)]{DLMF2021.06.15}. In case of $a > 0$, cf. \cite{Ka1998, Ka2020}. 

The Mellin-Barnes integral approach gives
\begin{equation} \label{eq:Hurwitz.Zeta.shifted}
\zetafcn( s, z + a ) = \frac{z^{1-s}}{s-1} + \sum_{k=0}^{K-1} \binom{-s}{k} \, \zetafcn( -k, a ) \, z^{-s-k} + \rho_K( s, a, z)
\end{equation}
valid in the sector $|\arg z| < \pi$ and such that $\re s > - K$ and $\alpha > 0$. The remainder term takes the form
\begin{equation}  \label{eq:Hurwitz.Zeta.shifted.remainder}
\rho_K( s, a, z) = \frac{1}{2\pi \, i} \int_{\gamma_K - i \infty}^{\gamma_K + i \infty} \frac{\gammafcn(w+s) \gammafcn(-w)}{\gammafcn(s)} \zetafcn(w+s, \alpha) \, z^w \dd w,
\end{equation}
where $\gamma_K$ satisfies $-1 - \re s - K < \gamma_K < - \re s - K$. Throughout the sector $| \arg z | \leq \pi - \delta < \pi$ holds the estimate $\rho_K( s, a, z) = \mathcal{O}( | z |^{-\re s - K} )$ as $z \to \infty$.

Partial differentiation with respect to $s$ yields
\begin{equation}
\begin{split} \label{eq:1st.HurwitzZeta.s.derivative}
\zetafcn^\prime(s, z + a ) 
&\DEF \frac{\partial }{\partial s} \zetafcn(s, z + a ) 
= - \frac{z^{1-s} \log z}{s-1} - \frac{z^{1-s}}{(s-1)^2} \\
&\phantom{=}- \sum_{k=0}^{K-1} \binom{-s}{k} \, \zetafcn( -k, a ) \, z^{-s-k} \left( \log z + \digammafcn( 1 - s ) - \digammafcn( 1 - s - k ) \right) \\
&\phantom{=} + \rho_K^\prime( s, a, z).
\end{split}
\end{equation}
This formula is valid under the same assumptions as above. It is understood that
\begin{equation*}
\binom{-s}{k} \big( \digammafcn( 1 - s ) - \digammafcn( 1 - s - k ) \big) 
= - \frac{(-1)^k}{k!} \sum_{\ell=0}^{k-1} \frac{\Pochhsymb{s}{k}}{s+\ell}  
= - \frac{(-1)^k}{k!} \sum_{\ell=0}^{k-1}  \Pochhsymb{s}{\ell} \Pochhsymb{s+1+\ell}{k-1-\ell}.
\end{equation*}
Throughout the sector $| \arg z | \leq \pi - \delta < \pi$ holds the estimate $\rho_K( s, a, z) = \mathcal{O}( | z |^{-\re s - K} \log | z | )$ as $z \to \infty$. In particular, after an index shift and for $K \geq 2$, 
\begin{equation}
\begin{split} \label{eq:1st.HurwitzZeta.s.derivative.at.minus.1}
\zetafcn^\prime(-1, z + a ) 
&= \frac{1}{2} x^2 \log x - \frac{1}{4} x^2 - \zetafcn( 0, a ) \, x \log x - \zetafcn( -1, a ) \log x - \zetafcn( -1, a ) \\
&\phantom{=}+ \sum_{k=1}^{K-1} \frac{(-1)^{k}}{k(k+1)} \, \zetafcn( -k-1, a ) \, x^{-k} + \mathcal{O}( x^{-K} \log x ) \qquad \text{as $x \to \infty$.}
\end{split}
\end{equation}
A more detailed analysis shows that the $\log x$ factor in the remainder estimate can be dropped. The remainder term takes the form
\begin{equation*}
\frac{1}{2\pi i} \int_{\gamma_K - i \infty}^{\gamma_K + i \infty} \frac{1}{(w-1)w} \frac{\pi}{\sin( \pi w )} \, \zetafcn( w - 1, a ) \, x^w \, \dd w,
\end{equation*}
where $-1-K < \gamma_K < -K$. 
Furthermore, the restriction $a > 0$ can be relaxed to $a > -1$, $a \neq 0$, by means of the identities $\zetafcn( s, a ) = a^{-s} + \zetafcn( s, a + 1 )$ (and $\bernoulliB_{k+1}( a + 1 ) = \bernoulliB_{k+1}( a ) + ( k + 1 ) a^k$ if Bernoulli polynomials are used). In the case $a = 0$ formula \eqref{eq:1st.HurwitzZeta.s.derivative.at.minus.1} reduces to the well known asymptotic expansion.

\def\cprime{$'$} \def\polhk#1{\setbox0=\hbox{#1}{\ooalign{\hidewidth
  \lower1.5ex\hbox{`}\hidewidth\crcr\unhbox0}}}

\end{document}